\documentclass[11pt,twoside]{article}

\usepackage{jeffe,graphicx,hyperref}
\usepackage{url}
\usepackage{graphicx}
\usepackage[utf8]{inputenc}
\usepackage[margin=1in]{geometry}
\usepackage{marvosym}
\usepackage{cite}
\usepackage{verbatim}
\usepackage{color}
\usepackage{comment}
\usepackage{algorithm}
\usepackage{subfigure}
\usepackage{enumitem}
\usepackage[noend]{algorithmic}

\def\compilelong{}
\ifx\compilelong\undefined
  \excludecomment{longversion}
  \newenvironment{shortversion}{}{}
\else
  \excludecomment{shortversion}
  \newenvironment{longversion}{}{}
\fi


\def\etal{\emph{et~al.}}			

\def\head{\mathit{head}}
\def\tail{\mathit{tail}}

\def\rev{\mathit{rev}}

\let\eps\varepsilon
\def\cost{\mathit{c}}
\def\weight{\mathit{w}}
\def\ratio{\mathit{r}}
\def\dist{\mathrm{dist}}
\def\opt{\mathrm{OPT}}
\def\concat{\circ}


\newtheorem{theorem}{Theorem}[section]

\newtheorem{lemma}[theorem]{Lemma}

\clubpenalty=10000 
\widowpenalty = 10000

\begin{document}

\pagestyle{myheadings}
\markboth{A Polynomial-time Bicriteria Approximation Scheme for Planar Bisection}
		 {Kyle Fox, Philip N. Klein, and Shay Mozes}

\begin{titlepage}

\title{A Polynomial-time Bicriteria Approximation Scheme for Planar Bisection}
\author{
Kyle Fox\thanks{Duke University, Durham, NC, \url{kylefox@cs.duke.edu}.
  Portions of this work were done while the author
  was a postdoctoral fellow at the Institute for Computational and
  Experimental Research in Mathematics, Brown University, Providence,
  RI. Partially funded by NSF grants CCF-09-40671 and CCF-10-12254.}
  \and
Philip N. Klein\thanks{Brown University, Providence, RI, \url{klein@brown.edu}.
Research funded by NSF grants CCF-09-64037 and CCF-14-09520.}
\and
Shay Mozes\thanks{Interdisciplinary Center, Herzliya, Israel, \url{smozes@idc.ac.il}.
  Partially supported by an Israeli Science Foundation grant ISF
  749/13, and by the Israeli ministry of absorption.}
}

\maketitle
\begin{abstract}
Given an undirected graph with edge costs and node weights, the minimum bisection problem asks for a partition of the nodes into two parts of equal weight such that the sum of edge costs between the parts is minimized. We give a polynomial time bicriteria approximation scheme for bisection on planar graphs.

Specifically, let~$W$ be the total weight of all nodes in a planar graph~$G$. For any constant $\eps > 0$, our algorithm outputs a bipartition of the nodes such that each part weighs at most $W/2 + \eps$ and the total cost of edges crossing the partition is at most $(1+\eps)$ times the total cost of the optimal bisection. The previously best known approximation for planar minimum bisection, even with unit node weights, was~$O(\log n)$.
Our algorithm actually solves a more general problem where the input may include a target weight for the smaller side of the bipartition.
\end{abstract}


\thispagestyle{empty}
\setcounter{page}{0}
\end{titlepage}

\section{Introduction}

Breaking up is hard to do.  The most famous hard graph-breaking
problem is {\em graph bisection}: partitioning the vertices of a graph into two
equal-size subsets so as to minimize the number of edges between the
subsets.  This problem was proved NP-hard by Garey, Johnson, and
Stockmeyer~\cite{GareyJS76} in 1976.  

\paragraph{Background}
But how hard is it really?  In particular, how well can graph
bisection be approximated in polynomial time?  Even assuming $P\neq
NP$, we cannot at this point rule out the existence of a
polynomial-time approximation scheme (PTAS) for graph bisection.  The
best approximation ratio known to be achievable in polynomial time is
$O(\log n)$ where $n$ is the number of vertices, due to
R\"acke~\cite{Racke08}, improving on a bound of $O(\log^{1.5} n)$ due
to Feige and Krauthgamer~\cite{FeigeK02} (the first result discovered
that had a
polylogarithmic approximation ratio).

One way to make the problem easier is to relax the balance condition.
Given a number
$0<b\leq 1/2$, a bipartition $U\cup V$ of a graph's vertices is {\em
  $b$-balanced}\footnote{Some papers use this term to mean that each
  part has cardinality {\em at most} $bn$.} if $|U|\geq \lfloor b n\rfloor$ and $|V|\geq \lfloor b
n\rfloor$.    Any bipartition $U\cup V$ induces a {\em cut}, namely the set
of edges between $U$ and $V$.  The bisection problem is to find a minimum
$\frac{1}{2}$-balanced cut.  It might be a simpler problem to find a
nearly optimal
$b$-balanced cut for some $b<1/2$.  No better
approximation ratio is known for this problem when the input graph is
arbitrary.  However, for the special case of {\em planar} graphs, a
2-approximation algorithm was given by Garg, Saran, and
Vazirani~\cite{GargSV99} for finding a minimum $b$-balanced cut for
any $b\leq 1/3$.

Bicriteria approximation\footnote{This approach is also called a
  pseudo-approximation.} gives another way to relax the balance
condition.  A bicriteria approximation algorithm seeks a $b'$-balanced
cut whose size is at most some factor times the minimum size of a
$b$-balanced cut.  In an early and very influential paper, Leighton
and Rao~\cite{LeightonR99} showed, using a reduction to their $O(\log
n)$-approximation algorithm for another problem, {\em uniform sparsest
  cut}, that a $b'$-balanced cut could be found whose size is
$O(\frac{\log n}{b-b'})$ times the minimum size of a $b$-balanced cut
for any $b'<b$ and $b'<1/3$.  Using the improved $O(\sqrt{\log
  n})$-approximation algorithm of Arora, Rao, and
Vazirani~\cite{AroraRV09} for uniform sparsest cut, the approximation
ratio $O(\frac{\log n}{b-b'})$ can be improved to $O(\frac{\sqrt{\log
    n}}{b-b'})$.
Note that this performance ratio gets worse as the graph
size grows and gets worse as the balance $b'$ achieved by the
algorithm approaches the balance $b$ that defines the optimum.

\paragraph{Our results}
In this paper, we give a {\em bicriteria approximation scheme} for
bisection in planar graphs:

\begin{theorem} \label{thm:main} For any $\epsilon>0$, there is a
  polynomial-time algorithm that, given a planar graph $G$, returns a
  $\frac{1}{2}-\epsilon$-balanced cut whose size is at most
  $1+\epsilon$ times the optimum bisection size.
\end{theorem}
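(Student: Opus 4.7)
The plan is to adapt the brick/mortar decomposition framework developed by Borradaile, Klein, and Mathieu for the planar Steiner tree PTAS to the balanced cut setting. The framework has three ingredients: a structure theorem exhibiting a near-optimal solution with limited ``combinatorial complexity'' relative to a canonical decomposition of $G$, a dynamic program that finds the best solution satisfying the structural constraint, and use of the bicriteria slack to discretize the node-weight dimension of the DP state.

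For the structure theorem, I would start from a BFS or shortest-path tree and apply Baker-style shifting to cut $G$ into overlapping ``slabs'' of small radius; within each slab, build a mortar graph whose faces (bricks) have small perimeter relative to $\opt$. The key structural claim to prove is that, for some choice of shift, there is a bipartition of cost at most $(1+\eps)\,\opt$ and imbalance at most $\eps W$ whose cut edges either lie on the mortar or enter/leave each brick through at most $O(1/\eps)$ portals on the brick's boundary. Establishing this requires a cut-rerouting argument: pieces of the optimal cut interior to a brick are replaced by mortar paths, and the small-perimeter and portal-density bounds are used to charge the extra cost to $\eps\,\opt$.

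Given such a structural result, I would run a dynamic program that, for each brick, records the pattern by which the cut hits its portals (a constant-sized pattern) together with the node-weight contributed to each side of the cut within the brick, rounded to multiples of $\eps W/\mathrm{poly}(n)$. This yields polynomially many DP states per brick; within a brick, one enumerates feasible local cuts consistent with a portal pattern, and between bricks one merges states along the mortar (which has small branchwidth, or can be processed via a sphere-cut decomposition). At the root one outputs a minimum-cost state whose total weight on each side is in $[W/2-\eps W,\, W/2+\eps W]$.

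The main obstacle is the structure theorem, because balance is a \emph{topological, global} property of the cut rather than an additive function of edges: rerouting a segment of the optimal cut through the mortar can flip many vertices from one side to the other, upsetting balance. Unlike Steiner tree or TSP, we cannot argue purely about edge cost. The plan is to use the bicriteria slack to absorb this: show that each local rerouting move changes the side-weights by at most $\eps W/\mathrm{poly}(n)$ and that only $\mathrm{poly}(n)$ such moves are performed, so the net imbalance is at most $\eps W$, while the cost overhead is simultaneously at most $\eps\,\opt$. Making this accounting work — especially handling the possibility that the cut has many connected components and that rerouting can merge or split the induced planar regions — is where I expect the technical heart of the proof to lie.
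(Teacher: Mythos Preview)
You have correctly identified the central obstacle --- rerouting a piece of the optimal cut through a brick can move many vertices across the bipartition --- but the proposal contains no mechanism that would make the claimed bound ``each local rerouting move changes the side-weights by at most $\eps W/\mathrm{poly}(n)$'' true. In the Borradaile--Klein--Mathieu brick decomposition, bricks are cut out by shortest paths and have small \emph{perimeter} relative to $\opt$, but there is no control whatsoever on the \emph{weight} inside a brick: a single brick can contain $\Omega(W)$ weight. Consequently a single rerouting of a path through portals in such a brick can flip $\Omega(W)$ weight across the cut, and the accounting you sketch collapses. Baker shifting does not help either, since it trades off branchwidth against levels and again says nothing about how much weight lies between two consecutive BFS levels.

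The paper closes exactly this gap, and the idea is not incremental on top of bricks: one works in the planar dual and greedily builds a \emph{skeleton} of mutually noncrossing cycles, each chosen to be a maximally-enclosing cycle with small cost-to-enclosed-weight ratio ($\ratio \le \eps^{-1}\lambda$, where $\lambda \approx \opt/W$). An alternating add/splice procedure keeps the total skeleton cost at $O(\eps^{-1}\opt\log W)$ while guaranteeing (Lemma~4.4) that, in each remaining region, \emph{every} cycle not enclosing the region's heaviest hole has ratio exceeding $\eps^{-1}\lambda$. This is precisely the property that bounds weight shift: when a path $p$ of the optimum is replaced by a shortcut $p'$ in the spanner, $p\circ\mathrm{rev}(p')$ is such a cycle, so its enclosed weight is at most $\eps\cdot\cost(p)/\lambda$, and summing over all paths gives total weight shift $O(\eps\,\opt/\lambda)=O(\eps W)$. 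Two further tools are needed on top of this --- a \emph{cyclic double cover} of each region to prevent shortcuts from wrapping around the heaviest hole, and \emph{PC-clustering} to connect boundary components so that boundary-to-boundary spanners apply --- neither of which appears in the brick framework. In short, the missing idea is to base the decomposition on weight-aware low-ratio cycles rather than on distances; without that, the structure theorem you need is false.
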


That is, the algorithm returns a partition of the vertex set that is
{\em almost} perfectly balanced (each side has at most a fraction
$\frac{1}{2}+\epsilon$ of the vertices) and whose size is almost as small as the
smallest bisection.

Previously {\em no approximation algorithm was known that had a
$1+\epsilon$ approximation ratio} even if the algorithm was allowed to
return a $b'$-balanced cut for some constant $b'>0$, even if $b'$ was
allowed to depend on $\epsilon$, even for planar graphs.
The algorithm generalizes to handle $b$-balanced cuts:

\begin{theorem} \label{thm:main2} For any $\epsilon>0$, there is a
  polynomial time algorithm that, given a planar graph $G$ and given $0<b\leq\frac{1}{2}$, returns
  a $(b-\epsilon)$-balanced cut whose size is at most $1+\epsilon$ times the minimum $b$-balanced
  cut.
\end{theorem}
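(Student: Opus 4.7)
The plan is a reduction to Theorem~\ref{thm:main} via dummy-vertex augmentation. Let $W$ denote the total node weight of $G$. I would construct $G'$ from $G$ by adding isolated dummy vertices whose total weight is $(1-2b)W$, making the new total weight $W' = 2(1-b)W$. Planarity is preserved, and because the new vertices are isolated, the cost of any cut in $G'$ equals the cost of its restriction to $V(G)$.

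The first substantive step is to verify that the minimum bisection of $G'$ equals the minimum $b$-balanced cut of $G$. In one direction, any $b$-balanced cut of $G$ has both sides in the weight interval $[bW,(1-b)W]$, so it can be completed to an exact bisection of $G'$ by splitting the $(1-2b)W$ units of dummy mass appropriately between the two sides. In the other direction, restricting any bisection of $G'$ to $V(G)$ yields sides of weight $(1-b)W$ minus the dummy mass on the corresponding side, which must therefore lie in $[bW,(1-b)W]$. Combined with the cost identity, the two optima are equal.

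Next, I would invoke Theorem~\ref{thm:main} on $G'$ with tolerance $\epsilon' = \epsilon/(2(1-b)) \in [\epsilon/2, \epsilon]$, obtaining a $(1/2-\epsilon')$-balanced cut of $G'$ of cost at most $(1+\epsilon')$ times the minimum bisection of $G'$, which is in turn at most $(1+\epsilon)$ times the minimum $b$-balanced cut of $G$. Projecting this cut to $V(G)$ preserves its cost exactly, and each side can lose at most $(1-2b)W$ of dummy weight, so its weight falls from at least $(1/2-\epsilon')W'$ to at least $(1/2-\epsilon')W' - (1-2b)W = (b-\epsilon)W$, which is the required balance guarantee.

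The step requiring the most care is the representation of the dummy mass: materializing $(1-2b)W$ as many individual unit-weight dummy vertices could blow up the size of $G'$ when weights are large, so I would use a constant number of weighted dummies, or fold the dummy mass directly into the weighted-balance accounting of the bisection algorithm. Since $\epsilon' \ge \epsilon/2$ is constant whenever $\epsilon$ is, the polynomial running time of Theorem~\ref{thm:main} transfers to Theorem~\ref{thm:main2} without further adjustment.
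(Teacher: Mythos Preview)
Your reduction is correct. Padding $G$ with isolated dummy mass $(1-2b)W$ so that the new total is $W'=2(1-b)W$ does make the minimum bisection of $G'$ coincide with the minimum $b$-balanced cut of $G$, and your balance arithmetic with $\epsilon'=\epsilon/(2(1-b))$ is right: each side of the returned cut, after discarding at most $(1-2b)W$ of dummy weight, still has weight at least $(1/2-\epsilon')W'-(1-2b)W=(b-\epsilon)W$.

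This is, however, a genuinely different route from the paper's. The paper does not reduce Theorem~\ref{thm:main2} to Theorem~\ref{thm:main}; instead it proves the more general Theorem~\ref{thm:main3} (the $b$-bipartition version with arbitrary target $b$, weights, and costs) directly via the spanner framework, and both Theorem~\ref{thm:main} and Theorem~\ref{thm:main2} are stated as special cases of that. Concretely, to get Theorem~\ref{thm:main2} from Theorem~\ref{thm:main3} one runs the $b'$-bipartition algorithm for every achievable target $b'\in[b,1/2]$ (polynomially many after the weight-rounding preprocessing) and returns the cheapest output; the resulting cut is $(b-\epsilon)$-balanced and within $1+\epsilon$ of the optimum $b$-balanced cut. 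Your reduction is more modular in that it isolates bisection as the only case one must handle, whereas the paper's route avoids the dummy construction but requires the algorithm to accept an arbitrary target fraction from the outset.

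One small remark on your last paragraph: Theorem~\ref{thm:main2} as stated concerns unit vertex weights, so $W=n$ and the number of dummy vertices is $n-2\lfloor bn\rfloor\le n$, which is already polynomial; the blow-up concern does not arise. If you do want to handle general weights by using a constant number of weighted dummies, note that Theorem~\ref{thm:main} itself is stated for the unweighted case, so you would in effect be invoking Theorem~\ref{thm:main3} at $b=1/2$ --- at which point the detour through dummies buys nothing over applying Theorem~\ref{thm:main3} directly.
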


Also, the algorithm can handle nonnegative costs on edges and
nonnegative weights on vertices.  In fact, we prove a more powerful
theorem.  Let $G$ be a graph with vertex-weights and edge-costs.  For
a number $b$, a {\em $b$-bipartition} of $G$ is a bipartition $U \cup
V$ of the vertices of $G$ such that the total weight of $U$ is exactly
$b$ times the total weight of $G$.  The {\em cost} of the bipartition
$U\cup V$ is the cost of the corresponding cut, i.e. the set
of edges connecting $U$ and $V$.  For example, a minimum bisection is
a minimum-cost $\frac{1}{2}$-bipartition where all costs and weights
are~1. Let $\opt_b(G)$ be the cost of an optimal $b$-bipartition.

\begin{theorem} \label{thm:main3}  For any $\epsilon>0$, there is a
  polynomial-time algorithm that, given $b\geq 0$ and given a planar
  graph $G$ with edge costs and vertex weights such that $G$ has a
  $b$-bipartition, returns a $b'$-bipartition whose cost is at
  most $(1+\epsilon) \opt_b(G)$, where $b'\in [b-\epsilon, b+\epsilon]$.
\end{theorem}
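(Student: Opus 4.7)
The approach follows the standard template for planar-graph approximation schemes: build a low-cost subgraph that nearly captures an optimal solution, use it to reduce to a bounded-branchwidth subproblem, and solve that subproblem by dynamic programming. As preprocessing, I would guess a target cost $B$ from a polynomial-size geometric grid so that at least one guess satisfies $B\in[\opt_b(G),(1+\epsilon)\opt_b(G)]$, and guess the side of the cut containing a designated ``anchor'' vertex.

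The core step is a structural lemma: there exists a subgraph $H$ of $G$ of total edge cost $O(B/\epsilon)$ such that some $b'$-bipartition with $b'\in[b-\epsilon,b+\epsilon]$ and cost at most $(1+\epsilon)\opt_b(G)$ has all of its cut edges in $H$. To construct $H$, I would start from the edges of an approximate solution and add a family of shortest-path separators that slice $G$ into thin pieces. Then, by planarity, any cut edge of a near-optimal bipartition lying outside $H$ can be rerouted onto $H$ by symmetric-differencing with a short dual cycle. Each rerouting may shift a region of vertices between the two sides, but distributing the reroutings over a randomly shifted slicing controls both the total cost increase (by $\epsilon B$) and the cumulative weight transfer (by $\epsilon W$, which is absorbed by the slack allowed in $b'$).

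Given $H$, I would delete or contract edges outside $H$ so as to obtain a graph of branchwidth $O(1/\epsilon)$, and then run a dynamic program over its branch decomposition. The DP maintains, for each bag, a table indexed by the assignment of bag vertices to the two sides and by a bucketed value of the total weight already placed on one side, rounded to multiples of $\epsilon W / n$. Combining subtree tables bottom-up yields, in time polynomial in $n$ for fixed $\epsilon$, a $b'$-bipartition with $b'\in[b-\epsilon,b+\epsilon]$ whose cost is at most $(1+\epsilon)\opt_b(G)$. Taking the best output over all guesses of $B$ gives the claimed bicriteria guarantee.

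The main obstacle will be the structural lemma. For problems such as TSP or Steiner tree, rerouting onto a spanner only incurs a cost penalty, whereas here each rerouting additionally perturbs the bipartition's balance, because the dual symmetric-difference cycle typically encloses a nonempty region of weighted vertices. Simultaneously bounding the cost increase by $\epsilon B$ and the net weight transfer by $\epsilon W$ — while keeping $H$ of cost only $O(B/\epsilon)$ — will require a careful charging scheme, most likely a nested slicing argument combined with averaging over a random shift offset in the spirit of the Klein-style planar PTAS framework, together with the observation that the two ``sides'' created by a rerouting can be chosen (via the anchor guess) to minimize the resulting balance swing.
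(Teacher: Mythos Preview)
Your high-level framework matches the paper's exactly: build a spanner, contract non-spanner edges, thin to bounded branchwidth, and run a DP over the decomposition with (rescaled) weights. The preprocessing (guessing a cost parameter, making weights polynomial) is also essentially what the paper does. The gap is entirely in the structural lemma, and it is a real gap, not just a matter of filling in details.

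The mechanism you propose for bounding weight transfer---averaging over a random shift of a shortest-path slicing, plus choosing sides via an anchor---does not work. Random-shift averaging bounds \emph{cost}: the total edge cost is fixed, so some offset cuts only a $1/k$ fraction of it. There is no analogous averaging for \emph{weight}. A single rerouting replaces a subpath $p$ of the optimal dual cycle by a shortcut $p'$, and the weight that changes sides is the weight enclosed by $p\circ\rev(p')$. That enclosed weight is governed by the cost-to-weight ratio of cycles in the region, which is a structural property of the input and is completely insensitive to which slicing offset you pick. A region can be ``thin'' in the BFS/shortest-path sense yet contain a cycle of tiny cost enclosing almost all of $W$; one rerouting there destroys the balance, and the anchor trick can at best halve an already $\Omega(W)$ swing. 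So your spanner, as described, does not deliver the second guarantee ($\opt_{b'}(\widehat G)\le(1+\epsilon)\opt_b$ for some nearby $b'$).

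What the paper does instead is build the spanner specifically so that every region is \emph{certified} to contain no low cost-to-weight ratio cycle (except possibly one enclosing the region's heaviest hole). Concretely: it greedily collects a laminar family of maximally-enclosing cycles with ratio $\le\epsilon^{-1}\lambda$ (the ``skeleton''), interleaved with a \emph{splicing} step that prunes nested cycles whose enclosed weights are within a factor of two, keeping the nesting depth $O(\log W)$. Within each resulting region, shortcuts are built via boundary-to-boundary spanners, but two further devices are needed: a \emph{cyclic double cover} so that shortcuts never wrap around the heaviest hole (the one case Lemma~4.4 cannot rule out), and \emph{PC-clustering} to connect boundary components so a single boundary-to-boundary spanner suffices. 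With all this in place, each rerouting cycle has ratio $>\epsilon^{-1}\lambda$, hence shifts weight at most $\epsilon\cdot\cost(p)/\lambda$, and summing gives the $\epsilon W$ bound you want. Note also that the resulting spanner has cost $O(\epsilon^{-O(1)}\opt\log W)$, not $O(\opt/\epsilon)$; the $\log$ factor is inherent to the splicing charging scheme, so the branchwidth is $O(\epsilon^{-1}\log n)$, not $O(1/\epsilon)$, and the running time is $n^{O(1/\epsilon)}$ rather than $f(\epsilon)\cdot\mathrm{poly}(n)$.
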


\subsection{Related work}

There is much prior work on finding approximately optimal separators
in planar graphs.  Before the work of Leighton and
Rao~\cite{LeightonR99}, Rao~\cite{Rao87,Rao92} gave
approximation algorithms for balanced cuts in {\em planar} graphs.  
One is a true approximation algorithm for finding a $b$-balanced cut
(for $b\leq \frac{1}{3})$ whose performance guarantee is
logarithmic.  Another is a bicriteria approximation algorithm whose performance guarantee
grows as the balance $b'$ achieved by the algorithm approaches the
balance $b$ that defines the optimum.  In our algorithm, we employ a
subroutine of Rao~\cite{Rao92}.

Park and Phillips~\cite{ParkPhillips} gave improved algorithms for
achieving some of the goals of~\cite{Rao87,Rao92}.  The aforementioned
result of Saran, Garg, and Vazirani~\cite{GargSV99}, the
2-approximation algorithm for $b$-balanced cut in planar graphs for
$b<1/3$, built on the work of Park and Phillips.

There has been work on approximation schemes for other graph classes.
Arora, Karger, and Karpinski~\cite{AroraKK99} gave an approximation
scheme for bisection in unit-edge-cost {\em dense} graphs, graphs with
$\Omega(n^2)$ edges.  Guruswami, Makarychev, Raghavendra, Steurer and
Zhou~\cite{GuruswamiMRSZ11} gave an algorithm that, given a graph in which there is a
bisection that cuts a fraction $1-\epsilon$ of the edges, finds a
bisection that cuts a fraction $1-g(\epsilon)$ of the edges, where
$g(\epsilon)=O(\sqrt[3]{\epsilon} \log(1/\epsilon))$.

There has been much work on using approximation algorithms for
balanced separators to obtain approximation algorithms for other
problems; see the survey of Shmoys~\cite{ShmoysCutProblems}.

There has been much work on finding balanced separators of size
$O(\sqrt{n})$ in planar graphs, regardless of the optimum bisection
size.  Lipton and Tarjan~\cite{LT79} first gave such an algorithm.
Many papers built on this result: improvements to the multiplicative
constant, an algorithm to find separators that are also simple
cycles~\cite{Miller86}, algorithms that build on geometric embeddings~\cite{MillerTTV97}
or eigenvectors~\cite{SpielmanT96}.  In addition, 
there has been much work on algorithms that use planar separators.


\section{Overview}

We outline the algorithm for Theorem~\ref{thm:main3}, the bicriteria
approximation scheme for min $b$-bipartition.  The input is a
graph $G^*$ with vertex weights and edge costs.  The algorithm involves
edge {\em contraction}.  Contracting an edge $uv$ means removing the edge and replacing its
endpoints $u$ and $v$ with a single vertex $x$.  Edges previously
incident to $u$ or $v$ are now incident to $x$.  The weight assigned
to $x$ is the sum of the weights of $u$ and $v$.

\subsection{Framework for approximation scheme}
\label{sec:intro:framework}

The algorithm uses a framework of
Klein~\cite{Klein08}.  The framework has previously been used to
address optimization problems in planar
graphs~\cite{BateniCEHKM11,BateniHM11,MultiwayCut2012,BG07,BorradaileKleinMathieu2009,Klein08},
such as traveling salesman and Steiner tree, that involve minimizing the
cost of a set of edges subject to {\em connectivity} constraints.
The framework has never been used before in the context of {\em
  weight} constraints on the vertices.

First we give the outline.  Fix $0\leq b \leq 1$ and $\epsilon>0$.
The framework uses the notion of {\em branchwidth}\footnote{
{\em Treewidth} could be used instead}~\cite{ST94,Klein08}.

\begin{enumerate}
\itemsep=0pt
\item {\em Spanner step:} In the input graph $G^*$, contract a selected set of edges, 
  obtaining a graph $\widehat{G^*}$ with the following properties:
\begin{itemize}\itemsep=0pt
\item $\cost(\widehat{G^*}) \leq \rho \opt_b(G^*)$, and
\item there exists $b'\in [b-\epsilon,b+\epsilon]$ such that $\opt_{b'}(\widehat{G^*}) \leq (1+\epsilon) \opt_b(G^*)$.
\end{itemize}
where $\rho$ is a quantity that depends on the construction.\\
\item {\em Thinning step:} Select a set $S$ of edges in $\widehat{G^*}$
  such that:
\begin{itemize}\itemsep=0pt
\item 
$\cost(S) \leq (1/k) \cost(\widehat{G^*})$ and
\item 
$\widehat{G^*}-S$ has branchwidth $O(k)$,
\end{itemize}
where $k = \epsilon^{-1} \rho$.
\item {\em Dynamic-programming step:} Find a cheapest 
  $b'$-bipartition $(\widehat{U_1}, \widehat{U_2})$ in
  $\widehat{G^*}-S$, where $b' \in [b-\epsilon,b+\epsilon]$.
\item {\em Lifting step:}  Return $(U_1, U_2)$ where $U_i$ is the set of vertices of $G^*$
  coalesced to form vertices in $\widehat{U_i}$.
\end{enumerate}
The cost of the returned solution is at most
\begin{eqnarray*}
\opt_{b'}(\widehat{G^*}-S)+\cost(S) & \leq & \opt_{b'}(\widehat{G^*})+\cost(S)\\
& \leq & (1+\epsilon)\opt_b(G^*) + (1/k)\cost(\widehat{G^*})\\
& \leq & (1+\epsilon)\opt_b(G^*) + \epsilon \rho^{-1} \cost(\widehat{G^*})\\
& \leq & (1+\epsilon)\opt_b(G^*) + \epsilon \opt_b(G^*)
\end{eqnarray*}
which shows that the cost is at most $1+2\epsilon$ times the cost of
an optimal $b$-bipartition.  The fact that $(\widehat{U_1},
\widehat{U_2})$ is a $b'$-bipartition of $\widehat{G^*}$ means
that $(U_1, U_2)$ is a $b'$-bipartition of $G^*$.

The thinning step is straightforward: choose a root, and find breadth-first-search levels
for all edges in $\widehat{G^*}$.  For $i=0,1,2,\ldots,
k-1$, let $S_i$ be the set of edges whose levels are congruent mod~$k$
to $i$.  For each $i$, $\widehat{G^*}-S_i$ has branchwidth $O(k)$ (see,
e.g.,~\cite{Klein08}, also treewidth $O(k)$, see, e.g.,~\cite{Baker94})
 and at least one of the sets $S_i$ has cost at most $(1/k) \cost(\widehat{G^*})$.

The fact that $\widehat{G^*}-S$ has branchwidth $O(k)$ means that in the
dynamic-programming step an
optimal $b'$-bipartition can be found in time $2^{O(k)}\text{poly}(n, W)$ where $W$ is the sum of weights.

Assume for now that $W$ is $O(n)$ and that $\rho$ is $O(\log n)$.
Then the dynamic-programming step requires only polynomial time. 
(This is a straightforward generalization of Theorem~4.2
of~\cite{JansenKLS05}.)

The one challenging step is the {\em spanner step}.\footnote{This is
 usually the case in applications of the framework.}  The main work of
this paper is showing that this step can be done.

\begin{theorem} \label{thm:spanner} There is a constant $c$ and a polynomial-time
  algorithm that, given $\epsilon>0$, $b>0$ and a planar embedded graph $G^*$ with
  vertex weights and
  edge-costs, returns a graph $\widehat{G^*}$, obtained from $G^*$ by
  contracting edges, with the following properties:
  $\cost(\widehat{G^*}) \leq c \log n \cdot \opt_b(G^*)$, and
$\exists b'\in [b-\epsilon,b+\epsilon]$ such that
 $\opt_{b'}(\widehat{G^*}) \leq (1+\epsilon) \opt_b(G^*)$.
\end{theorem}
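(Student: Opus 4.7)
The plan is to construct $\widehat{G^*}$ by contracting everything that lies interior to an approximate balanced-cut decomposition, so that only a small-cost skeleton of uncontracted edges remains, and then to argue that a slight perturbation of any optimal $b$-bipartition still lives inside that skeleton.

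First I would invoke the Rao subroutine mentioned in the introduction to produce an $O(\log n)$-approximate $b$-balanced cut $C_0$ of $G^*$ of cost at most $c\log n \cdot \opt_b(G^*)$, whose sides have weights within $\epsilon W$ of $bW$ and $(1-b)W$; the balance slack in the theorem statement is exactly what lets us tolerate the looseness of Rao's solution. I would then enlarge $C_0$ into an edge set $F$ of the same asymptotic cost that is rich enough to carry the cut of some near-optimal bipartition; a natural candidate is to add short rerouting paths in the planar dual on each side of $C_0$, chosen so that every face-region of $G^* \setminus F$ has small dual diameter. The spanner is then obtained by contracting every edge of $G^*$ not in $F$. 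By construction $\cost(\widehat{G^*}) = \cost(F) \le c\log n \cdot \opt_b(G^*)$, which gives the first required property, and planarity is preserved by contraction.

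The main obstacle, and the bulk of the technical work, is the second property: producing $b' \in [b-\epsilon,b+\epsilon]$ with $\opt_{b'}(\widehat{G^*}) \le (1+\epsilon)\opt_b(G^*)$. I would prove this by taking an optimal $b$-bipartition $(U^*,V^*)$ of $G^*$ and editing it so that every cut edge lies in $F$ (equivalently, no contracted edge gets cut). For each edge of $(U^*,V^*) \setminus F$ I would reroute the local cut through nearby $F$-edges, using a Baker/Klein-style random-shift argument in the construction of $F$ to bound the expected total rerouting cost by $\epsilon \cdot \opt_b(G^*)$. Simultaneously, the weight moved between the two sides during these reroutings must stay within $\epsilon W$, which yields the desired $b'$; a good deterministic offset then exists by averaging over the random shifts.

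What I expect to be delicate is the simultaneous control of three quantities by a single construction: total skeleton cost $O(\log n) \opt_b(G^*)$, cost blow-up of at most $(1+\epsilon)$ on the rerouted bipartition, and balance drift of at most $\epsilon W$. Previous uses of the Klein framework (for TSP, Steiner tree, etc.) only had to track the first two, because there was no weight constraint on the rerouting. Here, if the spanner is not carefully aligned with the balance constraint, rerouting even cheap cut edges can drag vertex weight across $C_0$ in a correlated way and easily exceed the $\epsilon W$ budget; I would expect the construction of $F$ to therefore be driven by a weight-aware shifting scheme rather than the purely topological one used in earlier applications of the framework.
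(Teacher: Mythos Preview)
Your proposal has a genuine gap in the weight-control mechanism, and the overall approach differs fundamentally from the paper's.

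You build the spanner around a single approximate balanced cut $C_0$ and propose to control weight shift via a ``weight-aware shifting scheme'' analogous to Baker's technique. But a Baker/Klein random shift bounds \emph{cost} by an averaging argument; there is no analogous averaging that bounds the \emph{weight} moved across the cut during rerouting. Your intermediate condition---that each face-region of $G^*\setminus F$ have small dual diameter---does not help either: a region can have small diameter and yet contain arbitrarily large vertex weight, so rerouting a single cheap cut edge through such a region can shift $\Omega(W)$ weight. Moreover, the optimal bipartition may be topologically unrelated to $C_0$ (in the dual it may consist of many cycles, or of one cycle far from those of $C_0$), so enriching $F$ with paths ``near $C_0$'' gives no handle on rerouting it at all.

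The paper's construction is organized around a different invariant: the skeleton is a nested family of cycles, each with cost-to-enclosed-weight \emph{ratio} at most $\epsilon^{-1}\lambda$ (where $\lambda\approx\opt/W$), built greedily and then pruned by a ``splicing'' step so that the nesting depth is $O(\log W)$---this is what yields the $O(\log n)\cdot\opt$ cost bound. The crucial structural fact (Lemma~\ref{lem:no-low-ratio-remaining}) is that afterward, any cycle lying in a region and not enclosing that region's heaviest hole has ratio \emph{exceeding} $\epsilon^{-1}\lambda$. Hence when a subpath $p$ of the optimal dual solution is replaced by a shortcut $p'$, the cycle $p\concat\rev(p')$ encloses weight at most $O(\epsilon)\cdot\cost(p)/\lambda$, and summing gives total weight shift $O(\epsilon)\cdot\opt/\lambda=O(\epsilon W)$. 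This ratio-based coupling of weight shift to cost is the idea missing from your plan. Two further devices---the cyclic double cover, which prevents shortcuts from wrapping around the heaviest hole, and PC-clustering, which connects boundary components so that the boundary-to-boundary spanner of Lemma~\ref{lem:boundary-spanner} applies---are also required; neither appears in your outline. Finally, Rao's subroutine is used here only to \emph{find} maximal low-ratio cycles inside a region, not to produce an approximate balanced cut.
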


Once we have proved Theorem~\ref{thm:spanner}, showing
that there is a poly-time algorithm for
the {\em spanner} step of the framework, we will have proved
Theorem~\ref{thm:main3}.

\subsection{Spanner construction overview}


Many tools have been developed for spanner construction.  One
tool first used for Steiner TSP is this {\em boundary-to-boundary
  spanner}:
\begin{lemma}[Theorem 6.1 of \cite{Klein06}] \label{lem:boundary-spanner} Let $G$ be a planar
  embedded graph with edge-costs, and let $C$ be the boundary of some face of $G$.
  For any $\epsilon>0$, there is a subgraph $H$ of cost
  $O(\epsilon^{-4} \cost(C))$ such that, for any two vertices $x$ and
  $y$ of $C$, the $x$-to-$y$ distance in $H$ is at most $1+\epsilon$
  times the $x$-to-$y$ distance in $G$.  Furthermore, there is an $O(n
  \log n)$ time algorithm to derive $H$ from $G$.
\end{lemma}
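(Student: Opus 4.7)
The plan is to build $H$ as the cycle $C$ together with a hierarchical collection of shortest-path shortcuts between portals placed on $C$. The intuition is that for any two vertices $x, y \in V(C)$ with $d_G(x,y) = D$, we will route a $(1+\eps)$-approximate $x$-to-$y$ walk in $H$ by first traveling along $C$ to a nearby portal $p_x$, then using a precomputed shortest $p_x$-to-$p_y$ path in $G$ that we have included in $H$, and finally traveling along $C$ from $p_y$ to $y$. The portal spacing at the scale we choose must be $O(\eps)\cdot D$ so that the two $C$-detours together cost at most $\eps D$.

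Concretely, I would first include all edges of $C$ in $H$, contributing cost $\cost(C)$. Then, for each level $i = 0, 1, \ldots, L = \Theta(\log(1/\eps))$, I would place $\Theta(2^i/\eps^2)$ portals on $C$ evenly spaced by cost (with nested sets across levels, so the level-$i$ spacing is $\Theta(\eps^2 \cost(C)/2^i)$) and, for each pair of level-$i$ portals $(p, q)$ meeting a carefully chosen closeness criterion, add a shortest $p$-to-$q$ path of $G$ to $H$. The criterion is tuned so that (a)~given any pair $x, y$ at $G$-distance $D$, some level $i^{\ast}$ yields portal spacing $O(\eps^2 D)$ and includes a shortcut between the relevant nearest-portal pair, and (b)~the total cost of shortcuts stays within the $O(\eps^{-4}\cost(C))$ budget. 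Given such a construction, the approximation bound follows from $d_H(x, p_x) + d_H(p_y, y) = O(\eps^2 D)$ via edges of $C$, combined with $d_G(p_x, p_y) \leq D + O(\eps^2 D)$ by the triangle inequality.

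The main obstacle I anticipate is the cost analysis. At level $i$, restricting each portal's affinity set to $O(\eps^{-2})$ neighbors yields $O(\eps^{-4}\cdot 2^i)$ shortcuts, each of cost $O(\cost(C)/2^i)$, summing to $O(\eps^{-4}\cost(C))$ per level and $O(\eps^{-4} \log(1/\eps)\cost(C))$ overall; shaving the $\log$ factor to match the clean $O(\eps^{-4}\cost(C))$ bound will require exploiting that $C$ bounds a face, so that pairs of portals which are $C$-distance $\Omega(\cost(C))$ apart yet have small $d_G$-distance are combinatorially limited by planarity. The subtle subcase $d_G(x,y)\ll d_C(x,y)$, where the shortest path tunnels deep into the graph interior so that the nearest portals $p_x, p_y$ at the scale matched to $D$ lie on far-apart arcs of $C$, especially needs the planar embedding to guarantee that the chosen portal pair is covered by the affinity criterion without blowing the budget. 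The $O(n\log n)$ running time follows from standard planar multiple-source shortest-path machinery.
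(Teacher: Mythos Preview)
This lemma is not proved in the paper at all: it is quoted verbatim as Theorem~6.1 of~\cite{Klein06} and used as a black box. So there is no ``paper's own proof'' to compare against; your task here was really to reconstruct Klein's boundary-to-boundary spanner argument from scratch.

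On the substance of your sketch: the portal-and-shortcut scheme you describe is a reasonable heuristic, but as written it is not a proof, and it is not how~\cite{Klein06} actually obtains the $O(\eps^{-4}\cost(C))$ bound. You yourself flag the two genuine gaps. First, your level-by-level accounting gives $O(\eps^{-4}\log(1/\eps)\cdot\cost(C))$, and you only gesture at ``exploiting that $C$ bounds a face'' to remove the $\log$; that is the whole difficulty, and nothing in your outline indicates how planarity would be used to do it. Second, the ``subtle subcase'' where $d_G(x,y)\ll d_C(x,y)$ is exactly the case your affinity criterion (each portal linked only to $O(\eps^{-2})$ \emph{$C$-nearby} portals) fails to cover, and again you defer to an unspecified planarity argument. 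Without a concrete mechanism for both points, the construction does not yield the stated lemma.

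Klein's actual construction is structurally different: rather than portals plus all-pairs shortcuts at geometric scales, it builds a shortest-path tree from the boundary and selects edges in ``strips'' determined by distance levels from $C$, so that the cost bound comes from a direct charging argument against $\cost(C)$ and the distance-approximation from the strip geometry. That approach sidesteps both of your gaps simultaneously, which is why it achieves the clean $\eps^{-4}$ dependence. If you want to push your portal idea through, you would need a planarity-based noncrossing argument bounding the number of ``long-reach'' shortcut pairs per level; absent that, the right move is to cite~\cite{Klein06} as the paper does.
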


The edges defining the min-cost bisection or $b$-bipartition of input graph $G^*$ correspond in
the planar dual $G$ to a collection of edge-disjoint cycles.
Fragments of these cycles are paths; perturbing the solution by replacing such a path with a
nearly shortest path in a subgraph $H$ does not increase the cost of the solution by much.  This
simple idea is at the heart of the spanner construction.

At the highest level, we use the following strategy:
(Step 1) select a collection of
cycles, (Step 2) join some of them together with paths, (Step 3) for each region
of the planar dual bounded by these cycles and paths, for each
cycle $C$ that forms part of the boundary of that region, construct
the boundary-to-boundary spanner for $C$-to-$C$ paths in that
region.
The union of edges from Steps 1, 2, and 3 form the spanner.  

So far, however,  we have not handled weights.  Indeed, a
perturbation (in which a path $P$ of the optimal solution is replaced with
a path $P'$ in the spanner) could shift weight from one side of the
bipartition to the other.
We need a way to limit the amount of weight that could shift.
This is the purpose of Step~1.  Note that the original path $P$ and
replacement path $P'$ form a cycle $C$, and that weight that could
shift is enclosed by $C$.  The goal of Step~1 is to
ensure that, for every such cycle $C$ derived from such a
perturbation, 
the weight enclosed by $C$ is small
compared to the cost of $C$.  That way, the total amount of weight
shifted can be charged to the cost of the optimal solution.

Step~1 ensures that such cycles' cost/weight ratios are large by
greedily finding a collection of mutually noncrossing cycles of small
ratio.  Once Step~1 has completed, each of the regions bounded by
the noncrossing cycles contains no cycle with small cost/weight ratio (essentially).

The fact that cycles found in Step~1 have small
cost/weight ratio is used to show that the total cost of all those
cycles is not much more than the cost of the optimal solution.
Each cycle's cost is charged to the weight of {\em some} of the faces it encloses.
If we ensure that each face is charged at most a logarithmic number of times,
the total cost of the cycles is at most a log times $\opt$.

To ensure logarithmic charging, Step~1 alternates between adding cycles
to the spanner and removing cycles.  When one cycle is enclosed by the
other but the two cycles enclose almost the same weight, the pair of
cycles is designated a {\em splicing pair}, and, in an operation
called {\em splicing}, cycles
sandwiched between them are removed from the spanner. 

Here is one complication: A region bounded by cycles from Step~1 often is
bounded by {\em several} cycles, i.e. its boundary is disconnected.
The boundary-to-boundary spanner (Lemma~\ref{lem:boundary-spanner})
works only for a connected boundary.  Step~2 therefore uses a technique
called {\em PC clustering}, due to Bateni,
Hajiaghayi, and Marx~\cite{BateniHM11}, to add paths joining some of
the boundary components.  If two boundary components remain
unconnected after PC clustering, we can assume that some near-optimal solution does not connect them.

Here is another complication:  Consider a cycle $C$ associated with a
perturbation that replaces a path
$P$ of the optimal solution with a path $P'$ in the spanner. If $C$
happens to be sandwiched between two cycles comprising a splicing
pairs, then $C$ might have small cost/weight ratio. This happens
if $C$ encloses the inner cycle $C'$ of the splicing pair. To make sure
no such cycle $C$ is used in a perturbation, Step~3, in forming the boundary-to-boundary
spanners for the region $R$ containing $C'$, distinguishes between
paths going clockwise around $C'$ and paths going counterclockwise.  This is accomplished using a construction
from topology, the {\em cyclic double cover}.

\section{Preliminaries}
\label{sec:prelim}

\paragraph{Achieving polynomially bounded weights}\
Garg, Saran and Vazirani~\cite{GargSV99} observed that if one is willing
to accept a $(b \pm \eps)$-bipartition instead of a $b$-bipartition, then
one can assume polynomially bounded weights. This is done by defining
new weights $\weight(v) \leftarrow \lfloor \weight(v) \frac{n}{\eps
  W} \rfloor$ where $W$ is the sum of original weights and $n$ is the
number of vertices of the input graph.  After the transformation, the sum of weights is bounded
by $\eps^{-1}n$. However, due to the truncation, the weight of any
vertex may be off by $\frac{\eps W}{n}$ with respect to its original
weight.
Therefore, the weight of any set in a bipartition may be off by at most $\eps W$.
This is allowed by our theorems.  We therefore assume henceforth that
the sum of weights is polynomially bounded.

\paragraph{Basic definitions}

We assume that the reader is familiar with the basic concepts of planar
graphs such as planar embeddings and planar duality.
We use $G^*$ to denote the planar embedded input graph, and we use $G$ to denote its
planar dual.  The costs of
edges in $G^*$ are assigned to the corresponding edges of $G$. 
The weight function
can be viewed as an assignment of weights to faces.
For
the remainder of this paper we deal with the dual graph
$G$.  Henceforth, unless
otherwise stated, vertices, faces and edges refer to those of $G$.

For each edge $e$ in the edge-set $E$, 
we define two oppositely directed darts, one in each orientation. 
We define $\rev(\cdot)$ to be the function that maps each dart to the corresponding dart in 
the opposite direction.

A non-empty sequence 
$d_1 \ldots d_k$
 of darts is a {\em walk} if
the head of
$d_i$ is the tail of $d_{i+1}$ for every $1 \leq i \leq k$.
A walk is said to
be a {\em closed walk} if the tail of $d_1$ is the head of $d_k$.


Let $X$ be a walk in a planar embedded graph, and let $P=a\ X\ b$ and $Q=c\ X\ d$ be walks
that are identical except for their first and last darts.  
Let $a'$ be the successor of $a$ in $P$ and let $b'$ be the
predecessor of $b$ in $P$.  We say $Q$ forms a {\em crossing
  configuration} with $P$  if the clockwise cyclic order of darts whose head is
$\head(a)$ induces the cycle $(a\ \rev(a')\ c)$ and the clockwise
cyclic order of darts whose tail is 
$\tail(b)$ induces the cycle $(b\ \rev(b')\ d)$. 

We say a walk $P$ {\em crosses} a walk $Q$ if a subwalk of $P$ and a subwalk
of $Q$ form a crossing configuration.  
See Figure~\ref{fig:crossing-paths}.  We define a {\em cycle (of darts)} to be a
non-self-crossing closed walk that uses each dart at most once.  We
omit the modifier ``of darts.''
Thus for our purposes a cycle can use an edge at most twice--once in each direction---and
cannot cross itself.


We can assume (by adding a self-loop if necessary) that $G$ has a face
with zero weight whose boundary has zero cost.  We use $f_\infty$ to
denote this face, and we refer to it as the infinite face.

Let $C$ be the dart multiset of a set of closed walks, and let $C^*$ be the
multiset of corresponding darts in $G^*$.
Let $f$ be a face of $G$.
Let $P$ be any $f_\infty$-to-$f$ path in $G^*$ (the input graph).
We say $f$ is {\em enclosed} by $C$ if the parity of 
$|P\cap C^*|-|P\cap \rev(C^*)|$ is odd. See Figure~\ref{fig:enclosure}.

\begin{figure*}[t]
\begin{center}
\subfigure[Two crossing paths and a cycle of darts (left), a set of mutually
noncrossing cycles (middle), and a set of cycles that are not mutually noncrossing
 (right).  In the middle, the region bounded by three of the cycles is shaded.\label{fig:crossing-paths}]
{        \includegraphics[scale=1.0]{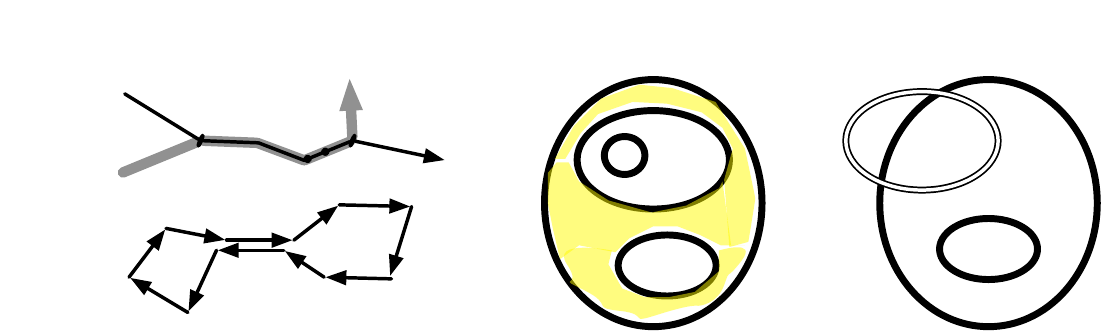}}
\hspace{0.05\textwidth}
\subfigure[A set $\mathcal C$ of two cycles (double lines). Face $a$ is
          enclosed by $\mathcal C$, but face $b$ is not. \label{fig:enclosure}]
{\includegraphics[scale=0.4]{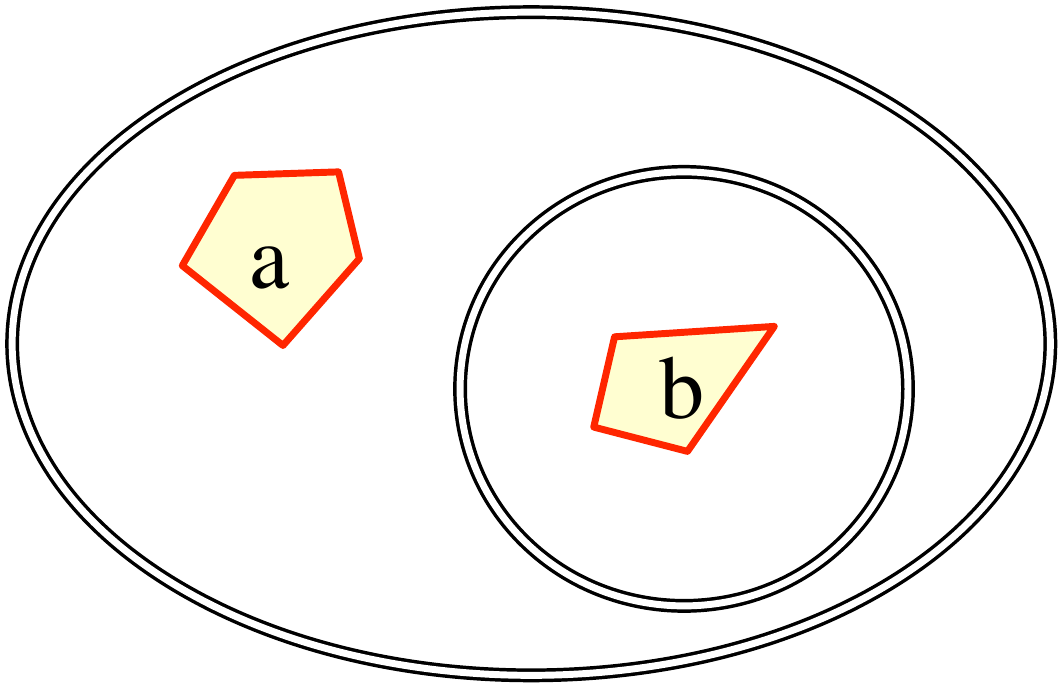}}
\caption{Crossings and enclosure}
\end{center}
\end{figure*}

Consider a bipartition $U\cup V$ in $G^*$ where the vertex of $G^*$
corresponding to $f_\infty$ belongs to $V$.  The duality of cuts and
cycles implies that the edges crossing this cut form a set
$\mathcal C$ of cycles in $G$.  The weight of the bipartition in $G^*$
is the total weight of faces enclosed by $\mathcal C$.

For a cycle $C$ of $G$,
we define  $\weight_G(C)$ to be the sum of weights of
faces enclosed by $C$ in $G$.  For a set $S$ of faces of $G$, we define
$\weight_{S}(C)$ to be the sum of weights of faces in $S$ enclosed by
$C$, and
we define $\ratio_{S}(C)$ to be the ratio of the cost of $C$ to $\weight_{S}(C)$.

\paragraph{Regions defined by a mutually non-crossing set of cycles}
Let $\mathcal C$ be a set of cycles that are mutually
non-crossing. See Figure~\ref{fig:crossing-paths}.
Assume for the sake of convenience that the boundary of $f_\infty$ is in
$\mathcal C$.
The set $\mathcal C$ can be represented by a rooted ordered tree $\mathcal
T$. Every node $v \in \mathcal T$ corresponds to a
cycle $C_v \in \mathcal C$. 
Ancestry in $\mathcal T$ is determined by enclosure: 
Node $v$ is an ancestor of node $u$ in $\mathcal T$ if $C_v$ encloses
$C_u$ in $G$. Thus, $f_\infty$ is the cycle corresponding to the
root of $\mathcal T$. 

Every node $v$ of $\mathcal T$ is associated
with a \emph{region} $R_v$. $R_v$ is the subgraph of $G$
consisting of vertices, edges, and faces enclosed by $C_v$, 
and not strictly enclosed by $C_u$ for any child $u
\in \mathcal T$ of $v$.

The cycle $C_v$ is called the \emph{outer
  boundary} of $R_v$. For a child $u$ of $v$, the cycle $C_u$ is
called a \emph{hole} of $R_v$.  The {\em weight} of a hole is the
total weight of faces enclosed by the hole.
Together, the outer boundary and the
holes form the {\em boundary} of $R_v$.  We say $R_v$ {\em strictly contains} an edge
if in addition the edge does not belong to the boundary of $R_v$.

We say that a region $R$ {\em contains} a cycle $C$ if $R$ contains every edge of
$C$, and that $R$ {\em strictly contains} $C$ if in addition $R$
strictly contains at least one edge of $C$.

\paragraph{Finding low-ratio cycles}\
Let $T$ be a shortest path tree with root $r$. Let $C$ be a 
cycle that encloses $r$.
We say that $C$ is discovered by $T$ from
the inside if, for every $v \in C$, the $r$-to-$v$ path in $T$ is
enclosed by $C$. Rao~\cite{Rao92} described a polynomial time technique that finds,
for every possible weight $w$, the minimum-cost cycle enclosing
exactly $w$ weight among cycles that go through $r$ and are discovered
by $T$ from the inside (if such a cycle exists).
\begin{shortversion}
Rao's technique can be used to find a maximally face-enclosing cycle with ratio at most some ratio
$\alpha$.
See the full version for details.
\end{shortversion}

\begin{longversion}
The following lemma implies that Rao's technique can be used 
to find a maximally
face-enclosing cycle with ratio at most some ratio $\alpha$.
\begin{lemma}\label{lem:maximal-discovered-from-inside}
Let $G$ be a planar embedded graph with edge costs and face weights.
Let $T$ be a shortest path tree in $G$, rooted at a vertex $r$. 
Let $C$ be  a maximally face-enclosing cycle $C$  with ratio at most
$\alpha$.
If $C$ encloses $r$ then $T$ discovers $C$ from the inside.
\end{lemma}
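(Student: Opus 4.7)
The plan is to argue by contradiction. Suppose some $v \in C$ has its $r$-to-$v$ tree path $P_v$ not enclosed by $C$. The goal is to build a cycle $C^+$ with $\ratio(C^+) \le \alpha$ whose enclosed face set strictly contains that of $C$, contradicting maximality.

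Because $G$ is planar, a path can pass from the strict interior of $C$ to the strict exterior only through a vertex of $C$. Since $r$ is strictly enclosed by $C$ and $v \in C$, the failure of enclosure forces $P_v$ to contain a maximal sub-path $P_a^b$ with endpoints $a, b \in C$ and all internal vertices strictly outside $C$. As $P_a^b$ is a sub-path of a root-to-leaf path in $T$, the vertex $a$ is an ancestor of $b$, so $\cost(P_a^b) = \dist(r,b) - \dist(r,a) \le \dist(a,b)$ by the triangle inequality.

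Next let $Q_1, Q_2$ be the two $a$-to-$b$ arcs of $C$ and set $C_i := P_a^b \cup Q_i$. Because $P_a^b$ meets $C$ only at $a$ and $b$, each $C_i$ is a simple dart cycle. Since $P_a^b$ protrudes into the exterior of $C$, one of $C_1, C_2$, call it $C^+$, is the ``outer'' cycle: its set of enclosed faces is the disjoint union of the set enclosed by $C$ and the set enclosed by the other cycle $C^-$. As $C^-$ is a simple cycle in a planar embedding, it encloses at least one face, so $C^+$ encloses strictly more faces than $C$. Writing $C^+ = P_a^b \cup Q_j$, we have $\cost(C^+) = \cost(P_a^b) + \cost(Q_j) \le \cost(Q_{3-j}) + \cost(Q_j) = \cost(C)$, using $\cost(P_a^b) \le \dist(a,b) \le \cost(Q_{3-j})$ (since $Q_{3-j}$ is an $a$-to-$b$ path in $G$). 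Combined with $\weight(C^+) \ge \weight(C) > 0$, this yields $\ratio(C^+) \le \ratio(C) \le \alpha$, the desired contradiction.

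The step I expect to be the main obstacle is the topological claim that one of the two candidates $C_1, C_2$ is the ``outer'' cycle, with the stated disjoint-union decomposition of enclosed face sets. Geometrically it is clear — $P_a^b$ carves the exterior of $C$ into a bounded pocket and a remainder that also contains the interior of $C$ — but a rigorous proof must go through the parity-based enclosure definition from the preliminaries (counting $|P \cap C_i^*| - |P \cap \rev(C_i^*)|$ along suitable $f_\infty$-to-$f$ paths) and the Jordan-curve property of simple dart cycles in a planar embedding.
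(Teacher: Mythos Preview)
Your proposal is correct and follows essentially the same argument as the paper's proof: take a maximal external subpath of the tree path, glue it to the appropriate arc of $C$ to get a cycle $C'$ that encloses $C$, and use the shortest-path property to bound $\cost(C')\le\cost(C)$. The paper handles your ``main obstacle'' with a single sentence (``Let $Q'$ be a subpath of $C$ with endpoints $x$ and $y$ such that the cycle $C'=Q'\circ P'$ encloses $C$''), so your level of rigor on that topological point already matches the paper's.
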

\begin{proof}
Assume that $C$ is not discovered by $T$
from the inside.
Then there is a vertex $v \in C$ such that the
$r$-to-$v$ path $P$ in $T$ is not enclosed by $C$. Let $P'$ be a
maximal subpath of $P$ consisting only of edges that are not enclosed
by $C$. Let $x,y$ be the endpoints of $P'$. Note that, since $C$
encloses $r$, both $x$ and $y$ are vertices of $C$. Let $Q'$ be a subpath of
$C$ with endpoints $x$ and $y$ such that the cycle $C' = Q' \circ P'$ encloses
$C$. Since $P'$ is a shortest $x$-to-$y$ path, the cost of $C'$ is at most the
cost of $C$.
Cycle $C'$ also encloses every face enclosed by $C$, so the weight of $C'$ is at least the weight
of $C$. This contradicts the fact that $C$ is  a maximally face-enclosing cycle $C$  with ratio at most
$\alpha$.
\end{proof}
To find a maximally face enclosing cycle with ratio at most $\alpha$, 
consider the set $M$ of cycles whose weight is maximum among all cycles
whose ratio is at most $\alpha$. Let $C$ be a cycle in $M$ enclosing the
greatest number of faces (there may be faces with zero weight). 
Note that $C$ is a maximally face enclosing cycle with ratio at most
$\alpha$.
Let $w$ denote the weight of $C$.
To find $C$, slightly perturb the weight of
every face to make it non-zero without significantly changing the
total weight of any cycle. This can be done by scaling the weights by
the number of faces, and adding 1 to the weight of every face. Note
that this transformation keeps the total weight polynomially bounded.
For every possible choice of root $r$ of the shortest path tree
$T$, use Rao's technique to compute, for each
possible (perturbed) weight $x$, the minimum cost cycle discovered from the
inside, and enclosing exactly $x$ perturbed weight. Return the cycle enclosing
maximum perturbed weight whose unperturbed ratio is at most $\alpha$.

Let $w'$ be the perturbed weight of $C$.
By Lemma~\ref{lem:maximal-discovered-from-inside}, for a correct
choice of~$r$,  $C$ is the minimum-cost cycle computed for
weight $w'$. 
By definition of the perturbation, the perturbed weight $w'$ of $C$ is greater than that of any
other cycle with ratio at most $\alpha$ that encloses fewer faces than
$C$. Also, no cycle whose (perturbed) weight is greater than $w'$ has
ratio at most $\alpha$ with respect to unperturbed weights. Therefore,
the procedure described returns the cycle $C$.

\end{longversion}

\paragraph{Edge contractions}\
One step of the algorithm described in this paper uses edge contractions. The
planar embedding is not relevant in this step. We therefore use a
definition of edge contraction that does not depend on the
embedding. Contracting an edge $e$ results in merging the endpoints of $e$
into a single vertex. Any self loops that are created in the process  are deleted.

\section{Skeleton construction}

We now begin to describe our procedure for constructing a spanner to approximate
a minimum cost $b$-bipartition.
For brevity, we let $\opt$ be the cost of an optimal $b$-bipartition.
Let~$W$ be the total weight of all faces in~$G$.
Recall that we work directly with the dual graph~$G$ of our input graph~$G^*$ and that a solution or
$b$-bipartition is a set of cycles in~$G$.
We assume that the spanner algorithm is given a rational number $\lambda$ such
that $\opt / W \leq \lambda \leq 2 \opt / W$, since an outer loop
surrounding the approximation algorithm can try different values of
$\lambda$ and return the best solution obtained.

The algorithm constructs a family~$\mathcal C$ of mutually noncrossing
cycles, which includes  the outer
face~$f_\infty$.  We refer to $\mathcal C$ as the \emph{skeleton}.
As discussed in Section~\ref{sec:prelim}, 
the cycles of $\mathcal C$ define {\em regions}, and 
define a rooted tree $\mathcal{T}$
based on enclosure.  For each node $v$ of this tree, we order the
children of $v$ left to right according to increasing weight of the
faces they enclose. See Figure~\ref{fig:preorder}. 
A preorder traversal of $\mathcal T$ that visits
siblings in this order defines a total left-to-right ordering on the
nodes of $\mathcal T$. 

\begin{figure}[t]
\centering
\includegraphics[scale=.3]{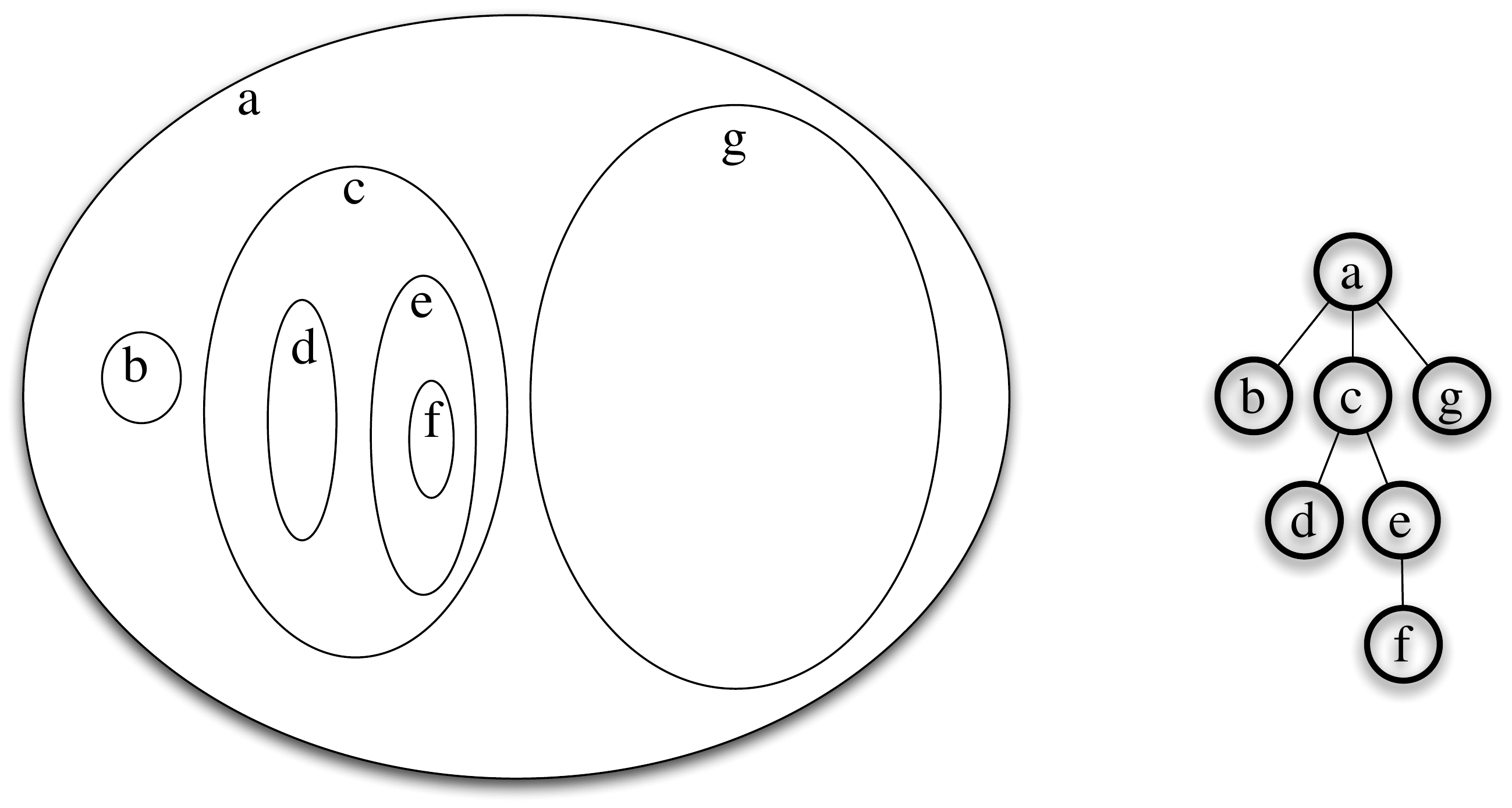}
\caption{On the left are non-crossing cycles.  On the right is the
  corresponding tree. Siblings are ordered left to right in increasing
  order of weight enclosed.  The cycles are labeled in preorder.}
\label{fig:preorder}
\end{figure}

The skeleton building algorithm is given as
Algorithm~\ref{alg:skeleton}. As the algorithm progresses, cycles and
regions are defined with respect to the {\em current} set of
cycles $\mathcal C$.

\newcommand{\ptr}{\text{\tt ptr}}
\begin{algorithm} \caption{Skeleton$(G,\lambda)$}
\label{alg:skeleton}
\begin{algorithmic}[1]
\STATE $\mathcal C \leftarrow \{\text{boundary of } f_\infty\}$
\STATE $\ptr \leftarrow \text{\it  null}$
\REPEAT

  \IF{some region $R$ with respect to $\mathcal C$ strictly contains a cycle $C$ s.t. $\ratio_R(C) \leq \epsilon^{-1} \lambda$} \label{line:if-blue}
    \STATE  let $C$ be a \textbf{maximally} face-enclosing such cycle
    \label{line:find-blue}
    \STATE set $F(C)$ to be the set of faces of $R$ enclosed by $C$
    \STATE add $C$ to $\mathcal C$ \label{line:add-blue}
  \ELSE
  \STATE let the cycles of $\mathcal{T}$ in preorder be $C_1 \dots C_k$
   \STATE \algorithmicif\ $\ptr=\text{\it null}$\ \algorithmicthen\
   $\ptr \leftarrow C_k$
   \STATE \algorithmicelse\ $\ptr \leftarrow$ cycle preceding $\ptr$
   in preorder of $\mathcal T$ 
   \STATE let $C_q$ be the value of $\ptr$\label{line:Cq}
    \STATE let $p$ be min such that $C_p$ is an ancestor of
    $C_q$ and $w_G(C_p)<2\,w_G(C_q)$
    \STATE remove cycles $C_{p+1}, \ldots, C_{q-1}$ from $\cal C$\label{line:splice}
  \ENDIF
\UNTIL $\ptr = \text{boundary of } f_\infty$
\STATE return $\mathcal{C}$
\end{algorithmic}
\end{algorithm}

Consider an execution of Line~\ref{line:splice} where a 
sequence of cycles $C_{p+1}, \ldots, C_{q-1}$ is removed from
$\mathcal C$. We say that $(C_p,C_q)$ is a \emph{splicing pair}.
Cycle~$C_p$ is the {\em rootward} member of the pair, and~$C_q$ is the
{\em leafward} member.

The following lemmas establish several useful properties of the skeleton, including the polynomial
running time of the skeleton-building algorithm.
\begin{lemma}\label{lem:fixed_suffix}
Let~$B$ be the value of $\ptr$ at any time $t$ in which Line~\ref{line:Cq} is executed. 
The suffix of the preorder starting at~$B$ does not change after time $t$.
In particular, cycle~$B$ will never be removed from the skeleton.
\end{lemma}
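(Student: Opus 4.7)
The plan is to prove the lemma by induction on main-loop iterations after time $t$, maintaining the invariant that $B$ remains in $\mathcal{C}$ and that the preorder suffix of $\mathcal{T}$ starting at $B$ is identical to its value just after time $t$'s splicing. The base case is immediate; the inductive step splits by whether the iteration at time $t' > t$ takes the IF or the ELSE branch.

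An ELSE iteration at $t' > t$ first moves $\ptr$ to the cycle immediately preceding the current $\ptr$ in the current preorder. By the invariant, $\ptr$ just before this move sits at $B$ or strictly before $B$ in preorder, so after the move $\ptr$ lies strictly before $B$. The splicing then removes $C_{p'+1}, \ldots, C_{q'-1}$ where the new $\ptr = C_{q'}$, so all removed cycles have preorder index in $(p', q')$, strictly less than $B$'s. The re-parenting induced by the removal only affects subtrees of removed cycles, none of which overlap the suffix after $B$; thus $B$ and its preorder suffix are undisturbed.

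An IF iteration at $t' > t$ adds a cycle $C'$ in some region $R'$. Let $s \le t'$ be the most recent ELSE iteration (or $s = t$ if none has occurred since $t$). At time $s$ no region contained a cycle of ratio at most $\epsilon^{-1} \lambda$, and regions unchanged since time $s$ still contain none, so $R'$ must lie in a region modified since time $s$. A short secondary induction on IF additions since time $s$ shows every such added cycle---and $C'$ in particular---lies strictly inside $C_{p_s}$ and is disjoint from the pointer cycle $\ptr_s$ at time $s$: neither can enclose the other, for any enclosing added cycle would inherit $\ptr_s$'s weight and fail the bound derived below. Hence the faces enclosed by $C'$ and by $\ptr_s$ form disjoint subsets of the faces enclosed by $C_{p_s}$, yielding
\[ w_G(C') + w_G(\ptr_s) \le w_G(C_{p_s}) < 2\, w_G(\ptr_s), \]
where the second inequality uses the choice of $p_s$. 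This gives $w_G(C') < w_G(\ptr_s)$; since sibling order under any common parent is by increasing enclosed weight, the child of $C_{p_s}$ whose subtree contains $C'$ precedes $\ptr_s$ in $C_{p_s}$'s sibling order. Hence $C'$'s preorder index is strictly less than $\ptr_s$'s, which by the invariant is at most $B$'s; the suffix starting at $B$ is unaffected.

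The hard part is the IF case, specifically the structural claim that every cycle added after time $t$ lies strictly inside $C_{p_s}$ and is disjoint from $\ptr_s$. Once that is in hand, the weight inequality forces the sibling-order placement of $C'$ before $\ptr_s$ and hence before $B$ in preorder, closing the induction.
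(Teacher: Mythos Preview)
Your inductive skeleton is close to the paper's, but the heart of your IF-case argument has a genuine gap: the disjointness claim is circular. You assert that a newly added cycle $C'$ cannot enclose $\ptr_s$ because ``any enclosing added cycle would inherit $\ptr_s$'s weight and fail the bound derived below,'' but the bound in question, $w_G(C') + w_G(\ptr_s) \le w_G(C_{p_s})$, already \emph{presupposes} that the face sets of $C'$ and $\ptr_s$ are disjoint. In fact the disjointness claim is false: after the splice at time $s$, a cycle $C'$ contained in the region of $C_{p_s}$ may well enclose the hole $\ptr_s$. For instance, if the splice removed a cycle $X$ having $\ptr_s$ and a lighter sibling $Y$ as children, a cycle $C'$ enclosing both $\ptr_s$ and $Y$ could have had $\ratio_{R_X}(C') > \eps^{-1}\lambda$ before the splice (when the faces of $Y$ did not count) yet $\ratio_{R_{C_{p_s}}}(C') \le \eps^{-1}\lambda$ afterward. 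Such a $C'$ is then added as an \emph{ancestor} of $\ptr_s$, and your sibling-ordering argument does not apply.

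The paper closes this gap with a different observation that you are missing: because $w_G(\ptr_s) > \tfrac{1}{2}\,w_G(C_{p_s})$, the cycle $\ptr_s$ (and hence $B$, when $B$ lies in the subtree of $\ptr_s$) stays on the \emph{rightmost} root-to-leaf path of the subtree rooted at $C_{p_s}$ throughout all subsequent additions. Any cycle inserted as a descendant of $C_{p_s}$ but not of $B$ --- whether it lands in a disjoint subtree or becomes a new ancestor of $B$ --- is therefore strictly left of $B$ in preorder, and since $B$ is rightmost, the suffix (which consists of $B$'s subtree followed by material outside $C_{p_s}$'s subtree) is unchanged. Your ELSE case would also benefit from this rightmost-path property: your one-line justification that ``re-parenting only affects subtrees of removed cycles'' glosses over why reparenting $C_{q'}$ under $C_{p'}$ does not introduce or delete right siblings along the path to $B$. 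Replace the disjointness/weight-inequality argument with the rightmost-path invariant and both cases go through.
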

\begin{proof}

The proof is by induction on the number of times line~\ref{line:Cq} is
executed after time $t$.  If this number is zero, i.e. 
if time $t$ is the last execution of line~\ref{line:Cq}, then 
$\mathcal T$ can only change when a splice with splicing pair $(A, B)$
occurs, for some cycle $A$.
By definition of the algorithm, no new cycles can be added
to~$\mathcal T$ at time $t$.
Splicing with splicing pair $(A,B)$ does not change the region of any cycle 
to the left of $A$ (exclusive) or 
to the right of~$B$ (inclusive).
Therefore, no new cycles can be added to the left of $A$ or to the
right of $B$ after the splice.
In other words, any new cycle~$C$ added after the splice 
is added as a descendent of $A$, but not of $B$.
Since  $w_G(B) > \frac{1}{2}w_G(A)$, $B$ always lies on the rightmost path in the
subtree of $\mathcal T$ rooted at $A$. Hence any such new cycle $C$ is
added left of $B$ in the preorder.

For the inductive step, by the argument above, the suffix
starting at $B$ does not change until the next time
line~\ref{line:Cq} is executed. Since $\ptr$ moves strictly to the
left, the inductive hypothesis implies that the suffix remains
unchanged for the remainder of the execution.
\end{proof}

\begin{lemma}The skeleton algorithm runs in polynomial time.
\end{lemma}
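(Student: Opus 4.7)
The plan is to separately bound the number of if-branch and else-branch executions of the main loop, and then to verify that each individual iteration can be carried out in polynomial time. The cost per iteration is dominated by line~\ref{line:find-blue}, which asks for a maximally face-enclosing cycle of ratio at most $\epsilon^{-1}\lambda$ strictly inside some current region; after the face-weight perturbation described in the preliminaries, Rao's technique applied in each of the current regions finds such a cycle in polynomial time.

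For the number of else-iterations I would invoke Lemma~\ref{lem:fixed_suffix}. Once $\ptr$ has been assigned to a cycle $B$, the lemma guarantees that $B$ is permanent in the skeleton and the entire suffix of the preorder beginning at $B$ is frozen. Each subsequent else-iteration moves $\ptr$ one step leftward into this stable, monotonically growing suffix, so no cycle is ever the value of $\ptr$ twice. Hence the number of else-iterations is bounded by the number of distinct cycles that can appear in $\mathcal{C}$ over the course of the algorithm.

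For the number of if-iterations between two consecutive else-iterations (or from the start to the first else), no cycle is removed, and each newly added cycle is strictly contained in the interior of some existing region and hence is distinct from every cycle already in $\mathcal{C}$. Consequently $\mathcal{C}$ grows strictly monotonically during any such run, and the length of the run is at most the maximum size of $\mathcal{C}$. Multiplying by the bound on the number of else-iterations then gives an overall polynomial bound on the total number of iterations.

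The main obstacle is to establish that the size of $\mathcal{C}$, and more generally the number of distinct cycles that ever appear in the skeleton, is polynomially bounded; once this is in place, Lemma~\ref{lem:fixed_suffix} does the rest of the work to prevent the algorithm from oscillating between additions in if-branches and deletions in splices, and the algorithm terminates in polynomial time.
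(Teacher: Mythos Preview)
Your overall structure matches the paper's proof: bound the else-branch executions via Lemma~\ref{lem:fixed_suffix}, bound the if-branch executions in each run by the maximum size of $\mathcal{C}$, and multiply. The paper gets $O(n)$ else-iterations and hence $O(n^2)$ total cycle additions.

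However, you explicitly flag as the ``main obstacle'' the one ingredient that makes the argument go through, and then do not supply it. The missing fact is that $|\mathcal{C}|$ is bounded by the number of faces of $G$ at all times; the paper states this in half a sentence. The reason is the \emph{strictly contains} condition in line~\ref{line:if-blue}. When a cycle $C$ is added inside a region $R$, some edge $e$ of $C$ lies strictly in $R$, so both faces of $G$ incident to $e$ belong to $R$; after the addition one of these faces is enclosed by $C$ and the other is not, so both the new region bounded by $C$ and the residual region of $R$ retain at least one face. Splicing only merges regions. Hence the invariant ``every region contains at least one face of $G$'' holds throughout, and since the regions partition the faces, $|\mathcal{C}|\le O(n)$ always. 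With this in hand your argument (and the paper's) finishes: the fixed suffix from Lemma~\ref{lem:fixed_suffix} can grow at most $O(n)$ times, so there are $O(n)$ else-iterations, and each intervening run of if-iterations adds at most $O(n)$ cycles, giving $O(n^2)$ iterations in total.
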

\begin{proof}
By Lemma~\ref{lem:fixed_suffix}, each time line~\ref{line:Cq} is
executed, the length of the fixed suffix of the preorder of $\mathcal
T$ increases by one. The number of cycles in the skeleton is bounded
by the number of faces in $G$, so there are~$O(n)$ splicing steps.
Hence,
the total number of cycles added to $\mathcal
C$ throughout the execution of the skeleton construction algorithm is
$O(n^2)$. Since each cycle is found in polynomial time, the total
running time is polynomial.
\end{proof}

\begin{lemma}\label{lem:skeleton-cost}
The cost of the skeleton is $O(\eps^{-1} \opt \log W)$.
\end{lemma}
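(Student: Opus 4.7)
The plan is a direct charging argument: by the threshold test on line~\ref{line:find-blue}, every cycle $C$ added to $\mathcal C$ satisfies $\cost(C) \le \eps^{-1}\lambda\cdot \weight(F(C))$, so
\[
\cost(\mathcal C) \;\le\; \eps^{-1}\lambda \sum_{C\in\mathcal C}\weight(F(C)) \;=\; \eps^{-1}\lambda \sum_{f}\weight(f)\cdot N(f),
\]
where $N(f) = |\{C\in\mathcal C : f\in F(C)\}|$ counts the surviving cycles whose $F$-set contains $f$. Since $\lambda\le 2\,\opt/W$, the lemma reduces to showing $N(f) = O(\log W)$ for every face $f$.

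Fix $f$ and let $D_1\supsetneq D_2\supsetneq\cdots\supsetneq D_m$ be the cycles of $\mathcal C$ with $f\in F(D_i)$, ordered by enclosure; they form a chain because $\mathcal C$ is mutually noncrossing and each $D_i$ encloses $f$. The defining property of $F(\cdot)$ forces the insertion times to satisfy $t_1<t_2<\cdots<t_m$: at the moment $D_k$ is added, $D_k$ must be the innermost cycle of $\mathcal C$ enclosing $f$, so no deeper $D_j$ can already be present.

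The heart of the argument studies the splice triggered when $\ptr$ first reaches $D_{k+1}$, at some time $\tau_{k+1}$. Lemma~\ref{lem:fixed_suffix} tells us $\ptr$ visits cycles in reverse preorder, giving $\tau_m<\tau_{m-1}<\cdots<\tau_1$, and hence each $D_j$ with $j\le k$ lies in $\mathcal C$ as an ancestor of $D_{k+1}$ at time $\tau_{k+1}$. The algorithm chooses $C_p^{(k+1)}$ to be the rootmost ancestor of $D_{k+1}$ with weight below $2\,\weight_G(D_{k+1})$; if $C_p^{(k+1)}$ were strictly rootward of $D_k$, then $D_k$ would lie strictly between $C_p^{(k+1)}$ and $D_{k+1}$ in the preorder and be spliced out, contradicting the fact that $D_k$ survives. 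So either $C_p^{(k+1)} = D_k$, in which case every strict ancestor of $D_k$---and in particular $D_{k-1}$ when $k\ge 2$---has weight at least $2\,\weight_G(D_{k+1})$; or $C_p^{(k+1)}$ is a strict descendant of $D_k$, which forces $\weight_G(D_k)\ge 2\,\weight_G(D_{k+1})$. Either way, $\weight_G(D_{k+1}) \le \frac{1}{2}\weight_G(D_{k-1})$ for all $k \ge 2$, so the chain weights halve every two steps and $m = O(\log W)$.

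The main obstacle I expect is the timing bookkeeping behind the claim ``$D_k\in\mathcal C$ at time $\tau_{k+1}$.'' One must rule out the scenario in which $D_k$ is inserted, then spliced away, then re-inserted after $\tau_{k+1}$; combining the strict ordering $t_1<\cdots<t_m$ with Lemma~\ref{lem:fixed_suffix} (which freezes a cycle once it enters the fixed suffix) should handle it.
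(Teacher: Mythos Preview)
Your argument is correct and follows essentially the same route as the paper's proof: both charge $\cost(C)$ to $\weight(F(C))$, reduce to bounding $N(f)=|\{C\in\mathcal C: f\in F(C)\}|$, and then show that the weights $w_G(D_i)$ along the nested chain in $C(f)$ double at least every other step by analyzing the splice that fires when $\ptr$ reaches each $D_{k+1}$. Your explicit timing bookkeeping---the ordering $t_1<\cdots<t_m$ from the definition of $F(\cdot)$, the reverse ordering $\tau_m<\cdots<\tau_1$ from Lemma~\ref{lem:fixed_suffix}, and the conclusion that every $D_j$ with $j\le k$ is already present (and never again removed) when the splice at $D_{k+1}$ fires---makes rigorous exactly the step the paper compresses into the sentence ``Afterward, for each cycle~$C$ added to~$\mathcal C$, we have $F(C)\cap F(A)=\emptyset$,'' and your case split on whether $C_p^{(k+1)}$ equals $D_k$ or lies strictly below it yields the same ``doubles at least every other cycle'' conclusion.
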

\begin{proof}
Let~$f$ be a face of~$G$, and let~$C(f) = \{C \in \mathcal{C} : f \in F(C)\}$.
By Lemma~\ref{lem:fixed_suffix}, every cycle~$A$ in the final skeleton is pointed to by
$\ptr$ at some time~$t$.
Cycle~$A$ then participates in a (possibly trivial) splice, removing all but one ancestor cycle
with weight less than twice~$\weight_G(A)$.
Afterward, for each cycle~$C$ added to~$\mathcal{C}$, we have~$F(C) \cap F(A) = \emptyset$.
Let~$C_1,\dots,C_k$ be all cycles of~$C(f)$ in rootward ordering.
The weight of the cycles in the sequence doubles at least once every other cycle.
We have~$|C(f)| = O(\log W)$.

The ratio of each cycle $C$ in the skeleton is bounded by $\eps^{-1} \lambda$, so
$\cost(C) \leq \eps^{-1}\lambda \weight_G(F(C))$.
Thus, the total cost of cycles in the skeleton is at most
\begin{align*}
\sum_C \eps^{-1}\lambda \weight_G(F(C))
&\leq \eps^{-1}\lambda \sum_{f \in G} |C(f)| \cdot \weight(f)\\
&\leq O(\eps^{-1} \lambda  W \log W)\\
&= O(\eps^{-1}  \opt \log W).
\end{align*}
\end{proof}

The heavy nesting of low-ratio cycles in the skeleton guarantees the following lemma, which will
be crucial in proving our spanner construction is correct.
\begin{lemma}\label{lem:no-low-ratio-remaining}
Suppose that, when the algorithm terminates, a region $R$ 
contains cycle $C\not\in \mathcal C$, and $\ratio_G(C) \leq \eps^{-1} \lambda$. 
Then $C$ encloses the heaviest hole of $R$, and
$\ratio_S(C)>\epsilon^{-1} \lambda$ where $S = $ \{faces enclosed
  by outer boundary of $R$ but not by heaviest hole\}
\end{lemma}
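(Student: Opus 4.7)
First I would argue that $R$ \emph{strictly} contains $C$: since $C \notin \mathcal C$ is a cycle and each boundary cycle of $R$ (its outer boundary and its holes) lies in $\mathcal C$, if every edge of $C$ lay on the boundary of $R$ then $C$, being a single cycle, would have to coincide with one of those boundary cycles, contradicting $C \notin \mathcal C$. Hence at least one edge of $C$ lies strictly inside $R$.

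Next I would observe that Algorithm~\ref{alg:skeleton} terminates immediately after an ELSE iteration in which the IF-test of line~\ref{line:if-blue} has just failed: the loop can exit only in the ELSE branch, and in the final iteration $C_q$ equals the boundary of $f_\infty$, which has no ancestor in $\mathcal T$, so the splice is empty and $\mathcal C$ is unchanged from the start of that iteration. Hence at termination no region strictly contains a cycle of local ratio at most $\epsilon^{-1}\lambda$, so in particular $\ratio_R(C) > \epsilon^{-1}\lambda$. Combined with $\ratio_G(C) \le \epsilon^{-1}\lambda$, this yields $\weight_G(C) > \weight_R(C)$, so $C$ must enclose at least one hole of $R$ with positive weight.

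To show that the enclosed hole must include the heaviest hole $H_1$, suppose for contradiction it does not. Since the edges of $C$ lie in $R$ they cannot enter the interior of $H_1$, so either $F(C) \subseteq F(H_1)$ (impossible unless $C = H_1$) or $F(C) \cap F(H_1) = \emptyset$; hence the latter holds. Consider the iteration at which $H_1$ was added to $\mathcal C$ as a maximally face-enclosing cycle in some region $R_0$ strictly containing $C$ and $H_1$, with $\ratio_{R_0}(H_1) \le \epsilon^{-1}\lambda$. Since $\ratio_G(C) \le \epsilon^{-1}\lambda$ and $\ratio_G(H_1) \le \epsilon^{-1}\lambda$, the disjoint pair $C \cup H_1$ has combined cost $\cost(C)+\cost(H_1)$, combined enclosed weight $\weight_G(C)+\weight_G(H_1)$, and combined ratio at most $\epsilon^{-1}\lambda$. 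My plan is to promote this disjoint pair to a \emph{single} cycle $C^+$ strictly in $R_0$ with $\ratio_{R_0}(C^+) \le \epsilon^{-1}\lambda$ and enclosed face set strictly containing $F(H_1)$, contradicting $H_1$'s maximality. The bridging cycle is obtained by invoking Rao's polynomial-time procedure for finding maximally face-enclosing cycles, applied to a shortest-path tree rooted inside $H_1$, and controlling the cost of the bridging route by the slack in the combined ratio.

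Finally, once $C$ is known to enclose $H_1$ we have $\weight_S(C) = \weight_G(C) - \weight_G(H_1)$. If $C$ encloses no other hole of $R$ then $\weight_S(C) = \weight_R(C)$ and $\ratio_S(C) > \epsilon^{-1}\lambda$ follows directly from the conclusion of step two. Otherwise I apply an analogous argument to the hypothetical region whose outer boundary is $C_v$ and whose sole hole is $H_1$ (its face set is exactly $S$): if $\ratio_S(C) \le \epsilon^{-1}\lambda$, then $C$ is a low-ratio cycle strictly inside this hypothetical region whose enclosed face set strictly contains that of some other hole $H_i$ at its addition time; the same bridge-to-single-cycle construction then yields a single cycle contradicting $H_i$'s maximality. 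The main technical obstacle throughout is this single-cycle construction: the naive union of two disjoint cycles is not itself a cycle, so producing a \emph{single} enclosing cycle whose ratio remains at most $\epsilon^{-1}\lambda$ requires careful routing inside the relevant region, exploiting the low ratios of both cycles.
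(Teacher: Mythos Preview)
Your first three steps are sound: $R$ does strictly contain $C$, and because the final loop iteration is an \textsc{else} iteration with a trivial splice, the \textsc{if}-test of line~\ref{line:if-blue} fails on the terminal family $\mathcal C$, giving $\ratio_R(C)>\epsilon^{-1}\lambda$; combined with $\ratio_G(C)\le\epsilon^{-1}\lambda$ this forces $C$ to enclose some hole of positive weight. The gap is in step~4, and it is twofold.

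First, you assume that at the moment $H_1$ was added, the ambient region $R_0$ strictly contained $C$. This need not hold. Cycles are added and spliced out repeatedly; at the time $H_1$ is (last) inserted, $\mathcal C$ may well contain a cycle crossing $C$, or a hole of $R_0$ strictly enclosing $C$, either of which is later deleted by a splice. In that case $C$ is not contained in $R_0$ at all, and no maximality argument at time~$t$ says anything about~$C$.

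Second, even granting that assumption, your ``bridge to a single cycle'' step does not go through. You need a cycle $C^+$ in $R_0$ with $\ratio_{R_0}(C^+)\le\epsilon^{-1}\lambda$ that strictly encloses the face set of $H_1$. But $C\cup H_1$ is two disjoint cycles, and any single cycle enclosing both must pay for a connecting path whose cost is not controlled by anything in the hypotheses; in particular there is no ``slack in the combined ratio'' to exploit when both $\ratio_G(C)$ and $\ratio_G(H_1)$ equal $\epsilon^{-1}\lambda$. Rao's procedure finds the best cycle of each weight, but it does not manufacture a low-ratio cycle when none exists. The same defect recurs in your treatment of $\ratio_S(C)$ when $C$ encloses additional holes.

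The paper avoids both problems by a temporal argument that never attempts to merge cycles. It shows that some splice must remove a cycle weakly crossing $C$, takes the \emph{last} splicing pair $(A^0,B^0)$ whose sandwich contains $C$, observes that immediately after that splice $\ratio_{S_0}(C)>\epsilon^{-1}\lambda$ for $S_0=\{\text{faces in }A^0\setminus B^0\}$, and then tracks the parent--child pair $(A^t,B^t)$ around $C$ forward in time to show $S\subseteq S_0$ and $w_G(A^T)<2w_G(B^T)$ at termination. The doubling condition in the definition of a splicing pair is what forces $B^T$ to be the heaviest hole; your argument never invokes this condition, which is why it cannot distinguish the heaviest hole from the others.
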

\begin{proof} We say a cycle $D$ {\em weakly crosses} $C$ if $D$
  crosses $C$ or $D=C$.  The following claim is immediate:

{\bf Claim 1:} At termination, $\mathcal C$ contains no cycle that weakly crosses $C$.

{\bf Claim 2:} Some splice removes a cycle weakly crossing $C$.

{\bf Proof of Claim 2:}
At some point the algorithm adds to $\mathcal C$ a cycle that weakly crosses $C$ or is strictly enclosed by
  $C$.  (If this never happens then $C$ remains available to choose in
  line~5, and the algorithm is not ready to terminate.)  Furthermore,
  by maximality in line~5, before the algorithm adds to $\mathcal C$
  any cycle strictly enclosed by $C$, it must add a cycle that
  weakly crosses $C$.  By Claim 1, such a cycle must have been removed
  by a splice.\qed

{\bf Claim 3:} For the last splice removing a cycle $D$ that weakly
crosses $C$, the splicing pair $(A,B)$ must satisfy the following
condition:\\
{\large \bf\raisebox{-.5ex}{*}}\ \ each
  edge of $C$ is enclosed by $A$ and not strictly enclosed by $B$.

{\bf Proof of Claim 3:} Since the splice removes
$D$, every edge of $D$ is enclosed by $A$ and not strictly enclosed
by $B$.  After
that splice, $A\in \mathcal C$.  If $C$ weakly crosses $A$ then by Claim~1
$A$ must be removed later, contradicting the choice of $(A,B)$.
Similarly, $C$ does not weakly cross $B$.  This implies every edge of
$C$ is enclosed by $A$ and is not strictly enclosed by $B$, proving
the claim. \qed

Claims~2 and~3 imply that there is some splice whose splicing pair
satisfies Condition~*.    Let $(A^0,B^0)$ be the splicing pair of the {\em
  last} such splice.  Let $t_0$ be the time of that splice.  

  Let $S_0 = $ \{faces enclosed by $A^0$ and not by $B^0$\}.  We claim
  $\ratio_{S_0}(C) > \epsilon^{-1} \lambda$.   If not then after time $t_0$
  the cycle $C$ would still
  be available to add to $\mathcal C$, so some cycle weakly crossing
  $C$ would have been added, contradicting either Claim 1 or the
  definition of $t_0$.  Because $\ratio_S(C)>\ratio_G(C)$ and $B^0$
  strictly contains no edge of $C$, it follows that $C$ encloses $B^0$.
By Lemma~\ref{lem:fixed_suffix}, cycle $B^0$ is never subsequently
removed from $\mathcal C$.  Also, by Claims~1 and~3 no cycle weakly
crossing $C$ is subsequently added to $\mathcal C$.  Therefore, for
every $t\geq t_0$, at time $t$ there is a unique pair $A^t,B^t$ of cycles in $\mathcal C$
such that $A^t$ encloses $C$, and $C$ encloses $B^t$, and $A^t$ is the
parent of $B^t$ in the tree $\mathcal T$ of cycles of $\mathcal C$.

Since $B^0$ remains in $\mathcal C$, we infer $B^t$ encloses
$B^0$ for all $t \geq t_0$.
 We show by induction that at each time $t \geq t_0$,
cycle $A^0$ encloses $A^t$.
There are two mutually exclusive cases.
First, suppose that just after time $t$ a cycle $A^{t+1}$ enclosing $C$ is added to
$\mathcal C$ and becomes the parent of $B^t$.   Then, since $A^t$ is
still in $\mathcal C$, $A^t$ encloses $A^{t+1}$ and $A^0$ encloses $A^{t+1}$.

Second, we show that no splice at time~$t+1$ can remove $A^t$.  Assume for
a contradiction that some splice did remove $A^t$, and let $(A,B)$ be the splicing pair.  
If $B$ were not a descendant of $A^t$ in $\mathcal T$ then the fact that the splice
removes $A^t$ would mean it would remove all descendants of $A^t$,
including $B^0$, a contradiction.  If $B$ crosses $C$ then by Claim~3 the splice or a later one
satisfies Property~*, contradicting the choice of $t_0$.
Therefore $B$ is a proper descendant of $A^t$ that does not cross
$C$.  $B$ cannot enclose $C$ else it would be a parent of $B^t$,
contradicting the fact that $A^t,B^t$ are a parent-child pair.
Therefore $B$ strictly encloses no edge of $C$.  But then $(A, B)$
satisfy Property~*, a contradiction.

Let $T$ be the time the algorithm terminates.
We have shown that $A^0$ encloses $A^T$ and $B^T$ encloses $B^0$.  
Let $S=$ \{faces enclosed by $A^T$ and not by $B^T$\}.  It follows
that $w_G(B^T)\geq w_G(B^0)$ and $w_G(A^T) \leq w_G(A^0)$ and $S
\subseteq S_0$.    Because
$w_G(A^0) < 2 w_G(B^0)$, we have $w_G(A^T) < 2 w_G(B^T)$, so $B^T$ is
the heaviest hole of the region whose outer boundary is $A^T$.
Because $S \subseteq S_0$ and $\ratio_{S_0}(C) > \epsilon^{-1}
\lambda$, it follows that $\ratio_S(C) > \epsilon^{-1} \lambda$.  This
completes the proof. \qed

\end{proof}

\section{Shortcuts}

Consider an optimal solution~$O$.
Let~$K$ be a cycle of~$O$ that crosses the skeleton.
A \emph{path decomposition} of~$K$ is a decomposition of~$K$ into paths $p_0, p_1, \dots$ such
that the number of paths is minimized and no path $p_i$ crosses the skeleton.
Because the number of paths is minimized, the endpoints of the paths occur only on skeleton cycles
that are crossed by~$K$.

Let $p$ be a member of a path decomposition as defined above.
Let $u$ and $v$ be the endpoints of~$p$ and let $R$ be the  region of the skeleton that
contains $p$.
In order to build the spanner, we wish to replace $p$ by a $(1+\eps)$-approximately shortest
$u,v$-path~$p'$ in~$R$.
For concision, we call the approximately shortest path $p'$ a \emph{shortcut}.
The spanner contains a set of edges that carry shortcuts for all paths of all path decompositions.
These edges are chosen so their total cost is sufficiently small and replacing each path by its
shortcut does not perturb the weight enclosed by the solution by more than a small amount.

If~$R$ contains no holes, then it is relatively easy to compute shortcuts within~$R$.
Let~$C$ be the outer boundary of~$R$.
The spanner algorithm computes a subgraph~$A$ of~$G$ within~$R$ using the boundary-to-boundary
spanner of Lemma~\ref{lem:boundary-spanner}.

As we show later in Lemma~\ref{lem:replace-paths-simple}, replacing paths through regions without
holes by their shortcuts does not change the weight of the solution by very much.
Essentially, the faces that change sides when a path~$p$ is replaced by its shortcut~$p'$ are
exactly those enclosed by~$p \concat \rev(p')$.
The cycle enclosing these same faces must have low weight or there would exist a low ratio cycle
within~$R$ violating Lemma~\ref{lem:no-low-ratio-remaining}.

\begin{figure*}
\label{fig:surgery}
\centering
\setlength{\tabcolsep}{0.75in}
\begin{tabular}{c c}
\includegraphics[height=1.5in]{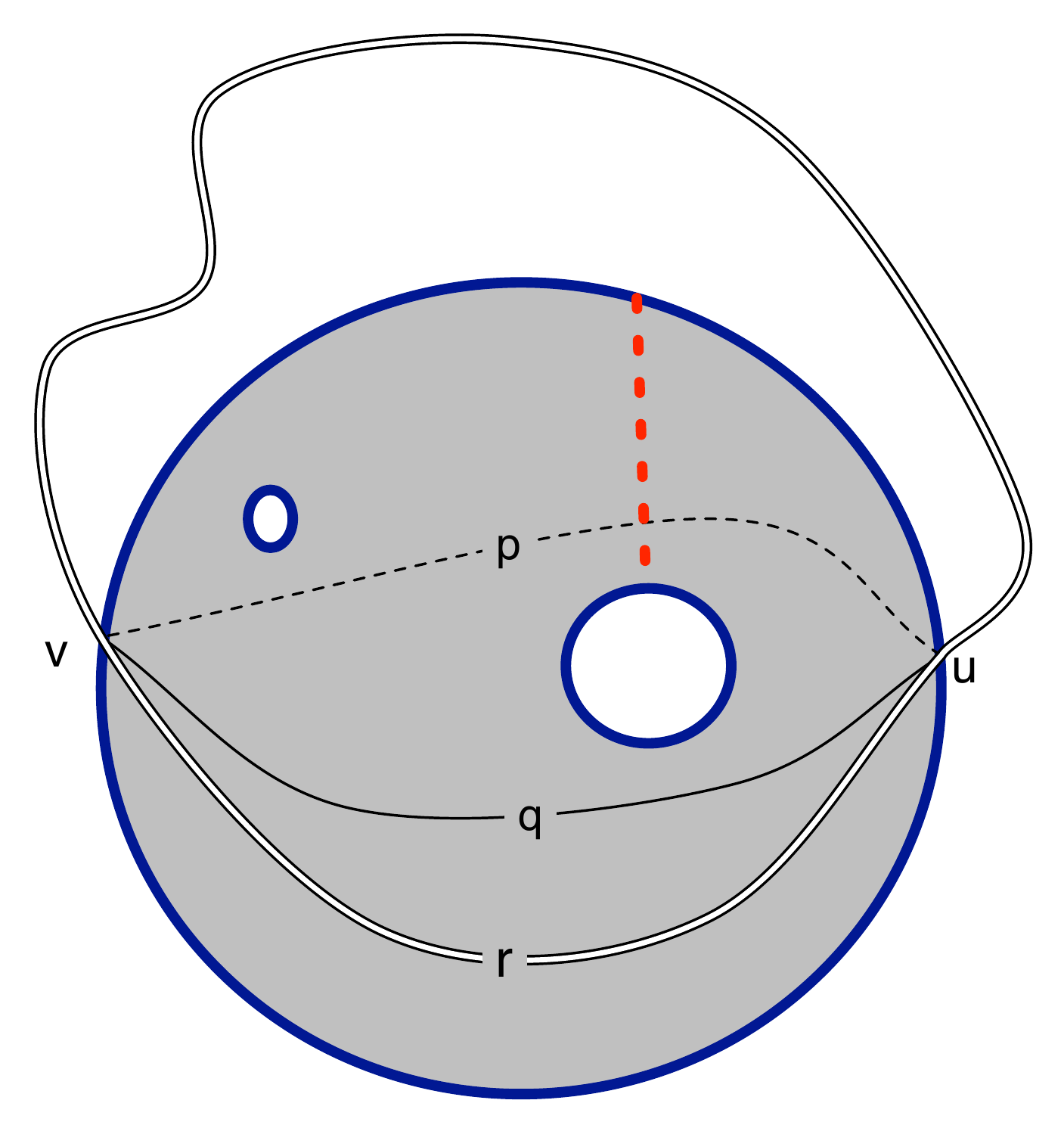} &
\includegraphics[height=1.5in]{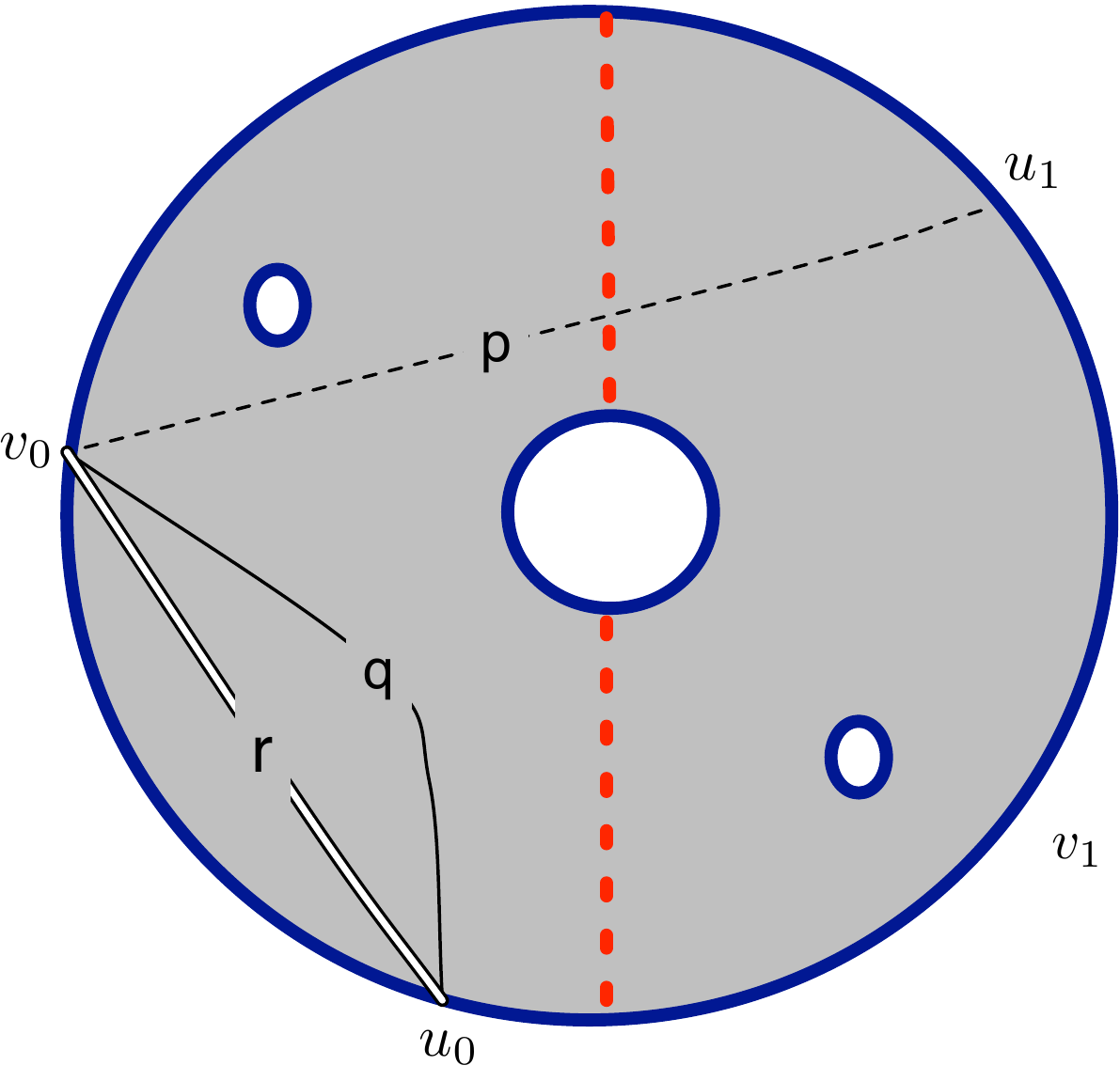}\\
(a) & (b)
\end{tabular}
\caption{(a) Path $r$ of a path decomposition enters and leaves a  region at vertices $u$
and $v$.
Replacing $r$ with shortcut $q$ will only affect the enclosure of faces outside the largest hole
of the region.
Replacing $r$ with shortcut $p$ will affect faces within the hole as well.
(b) The cyclic double cover of the region.
Lifts of $r$ and $q$ have the same endpoints in the cyclic double cover, but no lift of $p$ has
the same two endpoints.
Therefore, $p$ would not be a considered a shortcut for $r$ within the cyclic double cover.
}
\end{figure*}

Unfortunately, this argument does not hold if~$R$ contains holes.
First, there may be paths in a path decomposition that start and end on different boundary
components of~$R$.
A boundary-to-boundary spanner does not contain shortcuts for such paths.
And even if shortcuts are available,~$p \concat \rev(p')$ could enclose a high weight hole,
and all of the faces within that hole could change sides after replacing $p$ with $p'$.
See Figure~\ref{fig:surgery} (a).
Lemma~\ref{lem:no-low-ratio-remaining} does not imply anything about a cycle's weight if
it encloses the largest hole of a region, so there would be no limit to how much weight could
change sides.
In the next section, we describe how to address both of these issues for regions with holes.
We begin by describing a topological construct called the cyclic double cover used to address the
later issue.
A procedure called PC-clustering will be used to address the former.

\section{The cyclic double cover}

In this section, we describe a tool called the {\em cyclic double cover} that is used by our
spanner algorithm.
Thanks to this tool, the spanner will carry shortcuts that do not force the largest hole in
each region to change sides.
The cyclic double cover as described here was originally used by
Erickson~\cite{Erickson2011NontrivialCycles} and by Fox~\cite{Fox2013NontrivialCycles} to find short
topologically interesting cycles in surface embedded graphs.
Our presentation of the cyclic double cover is based closely on theirs.

Let $R$ be a region with outer boundary $C_0$ and largest hole $C_i$.
Let $L$ be an arbitrary path from $C_0$ to $C_i$.
Cut along $L$ to create a new region $R'$ where $C_0$, $C_i$, and two copies of $L$ ($L^+$
and $L^-$) form the outer boundary.
Let $(R', 0)$ and $(R', 1)$ be two distinct copies of $R'$.
For any vertex $v$ in $R$, let $(v,0)$ denote the copy of $v$ in $(R',0)$ and let $(v,1)$ denote
the copy in $(R',1)$.
Finally, let $(L^\pm,0)$ denote the copies of $L^\pm$ in $(R',0)$ and $(L^\pm, 1)$ denote the
copies in $(R',1)$.
The cyclic double cover $R^2$ is the planar graph resulting from identifying $(L^+,0)$ and
$(L^-,1)$ to a single path $(L,0)$ and identifying $(L^+,1)$ and $(L^-,0)$ to $(L,1)$.
Every hole of $R$ appears twice as a hole in $R^2$ except for $C_i$.
The edges of $C_0$ appear twice along the outer face of $R^2$ and the edges of $C_i$
appear twice along a single hole in $R^2$.
See Figure~\ref{fig:surgery} (b).

The cyclic double cover $R^2$ has an equivalent combinatorial definition.
Each vertex $v$ from $R$ has two copies $(v,0)$ and $(v,1)$ in $R^2$.
Edge $uv$ has two copies $(u,0) (v,0)$ and $(u,1) (v,1)$ if $uv$ does not enter~$L$ from the left.
Otherwise, the copies of $uv$ are $(u,0) (v,1)$ and $(u,1) (v,0)$.
Edge copies in~$R^2$ retain their costs from~$R$.
The \emph{projection} of any vertex, edge, or walk in~$R^2$ is the natural map to $R$ that
occurs by dropping the $0$ or $1$ from the vertex and edge tuples.
We say a vertex, edge, or walk $p$ in~$R$ \emph{lifts} to~$p'$ if~$p$ is the projection of~$p'$.
The outer boundary of~$R^2$ projects to two copies of~$C_0$ and one boundary of~$R^2$ projects
to two copies of~$C_i$.
Otherwise, every face or boundary of~$R^2$ projects to a face or hole in~$R$.
Call a boundary in~$R^2$ a hole if it projects to one or two copies of a hole in~$R$.
For a walk~$p$ in $R$, let~$x^2(p) = 0$ if $p$ crosses $L$ an even number of times.  Otherwise,
let $x^2(p) = 1$.
We immediately get the following lemmas.
\begin{lemma}\label{lem:walk-lift}
Let $p$ be a walk in $R$ from vertex $u$ to vertex $v$.
Walk $p$ is the projection of a unique walk $p'$ in $R^2$ from vertex $(u,0)$ to vertex
$(v,x^2(p))$.
\end{lemma}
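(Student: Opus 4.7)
The plan is to prove Lemma~\ref{lem:walk-lift} by induction on the length of the walk~$p$, using the combinatorial definition of~$R^2$ to lift one dart at a time. The key invariant is that after lifting a prefix~$p_k$ of~$p$ ending at vertex~$w$, the lifted prefix ends at $(w, x^2(p_k))$.

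For the base case, a trivial walk at $u$ lifts uniquely to the trivial walk at $(u,0)$, and $x^2$ of the empty walk is $0$, matching the claim. For the inductive step, suppose the first~$k$ darts of~$p$ have been lifted uniquely to a walk ending at $(w, x^2(p_k))$, and let the $(k+1)$-st dart be from~$w$ to~$x$. By the combinatorial definition of~$R^2$, there are exactly two edges of~$R^2$ projecting to the edge~$wx$: either $\{(w,0)(x,0),(w,1)(x,1)\}$ if $wx$ does not cross~$L$, or $\{(w,0)(x,1),(w,1)(x,0)\}$ if it does. In either case exactly one of these edges has $(w, x^2(p_k))$ as an endpoint, so the lift is forced, and the other endpoint is $(x, x^2(p_k))$ in the non-crossing case and $(x, 1-x^2(p_k))$ in the crossing case. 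By definition of $x^2$, this is precisely $(x, x^2(p_{k+1}))$, completing the induction.

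Applying the invariant to the full walk~$p$ gives a unique lifted walk starting at $(u,0)$ and ending at $(v, x^2(p))$, as required. The main obstacle, if any, is just a careful reading of the combinatorial definition to confirm that at every vertex of~$R^2$ and every outgoing dart of~$p$, exactly one lift of that dart is available — a step which reduces to the simple observation that the two edge-copies in~$R^2$ always sit at different $(\cdot, 0)/(\cdot, 1)$ fibers. No additional geometric or topological arguments are needed beyond the definitions already laid out.
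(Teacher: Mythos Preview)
Your proof is correct. The paper does not actually prove this lemma at all; it simply states it (together with the next lemma) as an immediate consequence of the combinatorial definition of $R^2$ and of $x^2(\cdot)$. Your induction on the length of the walk is exactly the routine verification the paper is leaving to the reader, and your key observation---that the two copies of each edge always lie in distinct fibers, so the lift of each successive dart is forced---is precisely what makes the lemma ``immediate.'' One terminological nit: the paper defines the fiber-swapping case as the edge \emph{entering $L$ from the left} rather than \emph{crossing $L$}, so if you want your argument to line up verbatim with the paper's definition you should phrase the case split that way; the content of the argument is unchanged.
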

\begin{lemma}\label{lem:closed-walk-projection}
Let~$C'$ be a closed walk in~$R^2$.
$C'$ projects to a unique closed walk~$C$ in~$R$ that does not enclose $C_i$.
\end{lemma}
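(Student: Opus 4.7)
The plan is to verify the lemma's three assertions---that $C'$ has a projection, that this projection is a unique closed walk $C$, and that $C$ does not enclose $C_i$---by unpacking the projection map for the first two and reducing the third to a parity argument via Lemma~\ref{lem:walk-lift}.

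Existence, uniqueness, and closedness are essentially definitional. The combinatorial description of $R^2$ assigns each vertex $(v,j)$ the image $v$ in $R$, and each edge $(u,j)(v,j')$ the image $uv$, so projecting $C'$ dart by dart yields a well-defined walk $C$ in $R$; since any two walks in $R^2$ with the same dart-by-dart images are literally the same sequence, this projection is unique. Because $C'$ is closed, its first and last vertex coincide in $R^2$, and hence their images (the first and last vertex of $C$) coincide in $R$, making $C$ a closed walk.

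For non-enclosure of $C_i$, the plan is to first show $x^2(C)=0$ and then translate this into non-enclosure through the paper's combinatorial definition. For the first step, fix any start vertex $u$ of $C$ and let $(u,j)$ be the corresponding start of $C'$; after relabeling sheets so that $j=0$, Lemma~\ref{lem:walk-lift} says that the unique lift of $C$ starting at $(u,0)$ ends at $(u,x^2(C))$. But this lift is $C'$, which is closed, so $x^2(C)=0$, i.e., $C$ crosses $L$ an even number of times. For the second step, fix any face $f$ of $G$ strictly enclosed by $C_i$ and construct a specific $f_\infty$-to-$f$ dual path $P$ in $G^*$ that travels through the complement of $R$ to $C_0$, crosses $C_0$ exactly once at the outer endpoint of $L$, runs alongside $L$ through $R$, and finally crosses $C_i$ exactly once at its inner endpoint to reach $f$. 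Since $C$ lies in $R$ and (with care in the choice of $P$) contributes no net parity at the single $C_0$- and $C_i$-crossings, the parity of crossings of $P$ with $C$ equals the parity of crossings of $L$ with $C$, namely $x^2(C)=0$, so $f$ is not enclosed by $C$. As this holds for every $f$ strictly inside $C_i$, $C$ does not enclose $C_i$.

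The main obstacle is the last step: rigorously matching the $L$-crossing parity $x^2(C)$ with the paper's enclosure definition relative to $C_i$. Intuitively $L$ is a dual arc whose algebraic intersection with any closed walk in $R$ is exactly the winding number around $C_i$ modulo $2$, but to carry this out I would need a careful planar-topology argument that controls the contributions of crossings between $C$ and the two ``boundary-crossing'' segments of the reference path $P$, most cleanly by choosing $P$ to cross $C_0$ and $C_i$ through prescribed darts that $C$ is known not to traverse transversely.
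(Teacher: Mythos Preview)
Your argument is correct and is essentially what the paper intends, though the paper gives no proof at all: the lemma is stated immediately after Lemma~\ref{lem:walk-lift} with the phrase ``We immediately get the following lemmas,'' treating it as a direct consequence of the cyclic double cover construction. Your use of Lemma~\ref{lem:walk-lift} to conclude $x^2(C)=0$ is exactly the intended mechanism, and the step you flag as the ``main obstacle''---identifying the $L$-crossing parity with the mod-$2$ winding number around $C_i$, hence with enclosure---is the standard covering-space fact that the paper is taking for granted; your sketch of a dual path $P$ running alongside $L$ is a reasonable way to make it explicit, and with a careful choice of $P$ the boundary contributions at $C_0$ and $C_i$ do vanish as you suggest.
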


Our algorithm builds a spanner by computing the cyclic double cover for each region~$R$
of the skeleton that contains a hole.
The construction of the double covers can be done in linear time.
It then computes a set of edges within each double cover that carry shortcuts for the lifts of
paths that may appear in path decomposition.
See Figure~\ref{fig:surgery} (b).
The projections of these edges back into the original regions will be added to the spanner.
In Lemma~\ref{lem:replace-paths}, we will show that we can replace the paths of path
decompositions by the projections of these shortcuts without changing the weight of the solution
by more than a small amount.

\section{PC-clustering}

Our spanner algorithm needs to find edges that carry shortcuts within each cyclic double cover.
However, the\newline
boundary-to-boundary spanner of Lemma~\ref{lem:boundary-spanner} will not find edges
to carry shortcuts between distinct boundaries of a region.
In order to use the algorithm, we augment the skeleton's edges within each double cover using the
PC-clustering algorithm of Bateni, Hajiaghayi, and Marx~\cite{BateniHM11}.

Let~$K$ be a cycle in the optimal solution that crosses the skeleton, and let $p$ be a path of
$K$'s path decomposition where $p$ lies in a region~$R$ with a hole.
The PC-clustering algorithm adds a relatively cheap set of edges to the boundary of the cyclic
double cover~$R^2$.
These edges are chosen so that, \emph{in general}, both of~$p$'s endpoints lie on the same
boundary component after running PC-clustering.
If $p$'s endpoints are still on different components, then at least one of the components must be
very cheap.
The edges of that component can be added to the optimal solution without substantially increasing
its cost, and $K$ can be modified to avoid crossing the skeleton cycles in that component.
Path~$p$ is no longer in a path decomposition, and it is no longer necessary for the spanner to
hold a shortcut between $p$'s endpoints.
Formally, the PC-clustering algorithm can be described as follows.

\begin{longversion}
\begin{lemma}[Bateni \etal~\cite{BateniHM11}]\label{lem:PC-clustering}
Let $G(V,E)$ be a graph with non-negative edge costs $\cost(e)$ 
and face potentials $\phi(v)$.
There exists a polynomial time algorithm to find a subgraph $Z$ such that
\begin{enumerate}
\item
the total cost of $Z$ is at most $2 \sum_{v \in V} \phi(v)$ and
\item
for any subgraph $H$ of $G$, there is a set $U$ of vertices such that
  \begin{enumerate}
  \item
  $\sum_{v \in U} \phi(v)$ is at most the cost of $H$ and
  \item
  if two vertices $v_1,v_2 \notin U$ are connected by $H$, then they are in the same component of
  $Z$.
  \end{enumerate}
\end{enumerate}
\end{lemma}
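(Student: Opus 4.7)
The plan is to prove this by adapting the primal-dual moat-growing algorithm of Goemans and Williamson for prize-collecting Steiner forest. I would introduce dual variables $y_S \ge 0$ for each nonempty $S \subseteq V$, constrained so that $\sum_{S:\, e \in \delta(S)} y_S \le \cost(e)$ for each edge and so that no moat accumulates dual exceeding its internal potential. The algorithm maintains a partition of $V$ into moats, each marked active or inactive; initially every vertex is its own active moat with $y = 0$. A continuous time $\tau$ rises, and every active moat's dual grows at unit rate. Two events fire: when an edge $e$ becomes tight, add $e$ to $Z$ and merge the moats containing its endpoints into one (active if either summand was); when an active moat $S$ reaches $\sum_{S' \subseteq S} y_{S'} = \sum_{v \in S} \phi(v)$, deactivate it. Halt when no active moats remain and return the selected edges as $Z$.

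For property~1, I would use the standard 2-approximation bookkeeping. Because $Z$ is acyclic (two distinct moats merge only when a tight edge appears between them), an averaging argument over the collection of moats active at each instant gives $\cost(Z) = \sum_{e \in Z} \sum_{S:\, e \in \delta(S)} y_S \le 2 \sum_S y_S$; combined with the deactivation rule, which enforces $\sum_{S \ni v} y_S \le \phi(v)$ along the nested chain of moats through $v$, this yields $\cost(Z) \le 2 \sum_v \phi(v)$.

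For property~2, given any subgraph $H$, I would construct $U$ by processing moats in reverse order of deactivation. For each deactivated moat $S$, I test whether the edges of $H$ crossing $\delta(S)$ that have not yet been charged suffice to pay for $\sum_{v \in S} \phi(v)$; if so I charge them against $S$ and keep $S$'s vertices out of $U$, otherwise I place $S$'s vertices into $U$. Property~2(a) holds because the charges are assigned to edge-disjoint portions of $H$. Property~2(b) holds because any $v_1, v_2 \notin U$ connected by $H$ must lie in moats whose separations $H$ paid for during their growth, and the corresponding tight edges added to $Z$ must have merged those moats into a single component of $Z$. The main obstacle is precisely the construction and analysis of $U$: because $Z$ is committed before $H$ is revealed, the witness $U$ has to be tailored to each $H$ while simultaneously satisfying the potential bound in (a) and the structural guarantee in (b); the reverse-deactivation sequencing and the invariant that each edge of $H$ is charged at most once are what let both conditions be met at once.
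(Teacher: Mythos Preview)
This lemma is not proved in the paper; it is quoted as a black box from Bateni, Hajiaghayi, and Marx, so there is no in-paper argument to compare against. Your high-level plan---Goemans--Williamson moat growing---is indeed the mechanism behind the cited result, but the sketch as written has genuine gaps in both parts.

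For property~1, you omit the pruning (reverse-delete) phase, and without it the bound is false. Take a star with center $c$, leaves $l_1,\dots,l_k$, unit edge costs, $\phi(c)=1$, $\phi(l_i)=\varepsilon$: each leaf deactivates at time $\varepsilon$, the center keeps growing, and at time $1-\varepsilon$ all $k$ edges become tight, so $\cost(Z)=k$ while $2\sum_v\phi(v)=2+2k\varepsilon$. The ``averaging over active moats'' argument needs the average $Z$-degree of the \emph{active} moats to be at most $2$ at every instant; here the lone active moat $\{c\}$ has $Z$-degree $k$. Your auxiliary claim $\sum_{S\ni v}y_S\le\phi(v)$ is likewise false in this example (for $v=l_1$ the chain contributes $\varepsilon$ from $\{l_1\}$ and another $\varepsilon$ from the final moat); the deactivation rule only controls $\sum_{S'\subseteq S}y_{S'}$ against $\sum_{u\in S}\phi(u)$, not the chain through a single vertex. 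For property~2, the charging is backwards: you charge $H$-edges to moats that are kept \emph{out of} $U$, which at best gives $\sum_{v\notin U}\phi(v)\le\cost(H)$, not the required bound on $\sum_{v\in U}\phi(v)$. And the justification of 2(b) is a non sequitur: $Z$ is fixed before $H$ is revealed, so the fact that $H$ ``paid for'' a moat boundary in your accounting says nothing about whether a tight edge of $Z$ was ever added across it. A correct argument must tie $\sum_{v\in U}\phi(v)$ directly to the dual load $\sum_S y_S\,|H\cap\delta(S)|\le\cost(H)$ and exploit the structure of the \emph{pruned} forest; see the original BHM paper for the precise construction of $U$.
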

\end{longversion}

\begin{shortversion}
\begin{lemma}[Bateni \etal~\cite{BateniHM11}]\label{lem:PC-clustering}
Let $G(V,E)$ be a graph with non-negative edge costs $\cost(e)$ 
and face potentials $\phi(v)$.
There exists a polynomial time algorithm to find a subgraph $Z$ such that
(1) the total cost of $Z$ is at most $2 \sum_{v \in V} \phi(v)$ and
(2)
for any subgraph $H$ of $G$, there is a set $U$ of vertices such that
  (2a)
  $\sum_{v \in U} \phi(v)$ is at most the cost of $H$ and
(2b)
  if two vertices $v_1,v_2 \notin U$ are connected by $H$, then they are in the same component of
  $Z$.
\end{lemma}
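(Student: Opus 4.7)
The plan is to apply the Goemans--Williamson primal-dual moat-growing algorithm, adapted to prize-collecting Steiner forest, with the potentials $\phi(v)$ serving as vertex penalties. Starting from singleton clusters that are all \emph{active}, I would grow the dual variable $y_C$ of every active cluster $C$ at unit rate. Two events can occur: an edge $e$ between two clusters $A$ and $B$ becomes saturated (i.e. $\sum_{S:\, e \in \delta(S)} y_S = \cost(e)$), at which point $A$ and $B$ merge into a single cluster that remains active iff its accumulated dual is still strictly below its total prize; or a cluster $C$ becomes saturated in the prize direction ($\sum_{C' \subseteq C} y_{C'} = \sum_{v \in C} \phi(v)$), at which point $C$ is \emph{deactivated}. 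The algorithm stops when no active cluster remains; its output $Z$ consists of the edges that became saturated, followed by a standard reverse-deletion pruning that removes any edge whose deletion does not disconnect the merge-history of any final cluster.

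For property~(1), the standard Goemans--Williamson bound yields $\cost(Z) \le 2 \sum_C y_C$, and dual feasibility against the prize constraints forces $\sum_C y_C \le \sum_v \phi(v)$, giving $\cost(Z) \le 2 \sum_v \phi(v)$.

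For property~(2), given any subgraph $H$, I would take $U$ to be the union of the vertex sets of all \emph{deactivated} clusters $C$ such that $H$ has at least one edge crossing $\partial C$. For~(2a), each such $C$ satisfies $\sum_{v \in C} \phi(v) = \sum_{C' \subseteq C} y_{C'}$ by the deactivation condition, and a moat-charging argument distributes this prize to the $H$-edges in $\delta(C)$: by the edge constraints $\sum_{S:\, e \in \delta(S)} y_S \le \cost(e)$ and by laminarity of the moat family, each $H$-edge absorbs total charge at most its own cost, so $\sum_{v \in U} \phi(v) \le \cost(H)$. For~(2b), if $v_1, v_2 \notin U$ were joined by a path $P \subseteq H$ but lay in distinct components of $Z$, then tracing $P$ outward from $v_1$ we would cross the boundary of some maximal final cluster $C$ separating $v_1$ from $v_2$; this $C$ must be deactivated (else its moat would have continued to grow and eventually merged past $P$), but then $\delta_H(C) \neq \emptyset$ puts $v_1$ into $U$, contradicting $v_1 \notin U$.

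The main obstacle is executing the charging argument in~(2a) without double counting, since a single $H$-edge may cross many nested saturated moats. This is handled by the standard Goemans--Williamson moat-charging trick, extended from connectivity to the prize-collecting forest setting: at each infinitesimal growth step $dt$, the dual mass accrued across the boundary of every active cluster equals $dt$ per cluster, and the active clusters cut by any single edge are laminar, so the total dual absorbed by that edge over time is bounded by its cost. The reverse-deletion step is essential to ensure that the components of $Z$ match the final cluster structure, so that the ``missing'' connectivity in~(2b) really is detected by some deactivated cluster $C$ placed in $U$.
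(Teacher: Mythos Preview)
The paper does not prove this lemma; it is quoted as a black box from Bateni, Hajiaghayi, and Marx, so there is no paper proof to compare against. Your plan---running the Goemans--Williamson prize-collecting moat-growing procedure and reading off~$Z$---is indeed the technique BHM use, and your arguments for property~(1) and for~(2b) are essentially right.

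The argument for~(2a), however, does not go through with your choice of~$U$. You take $U$ to be the union of all deactivated clusters~$C$ with $\delta_H(C)\neq\emptyset$ and then try to charge $\phi(C)=\sum_{C'\subseteq C}y_{C'}$ to $H$-edges via the edge constraints. But an $H$-edge $e=uv$ with $u\in C$, $v\notin C$ lies in $\delta(C')$ only for those sub-moats $C'\subseteq C$ that contain~$u$; dual mass sitting in sub-moats of~$C$ that do \emph{not} contain~$u$ cannot be routed to~$e$, and there need be no other $H$-edge to absorb it. The ``one edge crossing many nested moats'' issue you flag is the opposite concern and is indeed handled by laminarity; what breaks is that a deactivated cluster can have swallowed dual from sub-moats that~$H$ never touches.

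Here is a concrete failure. Take a path $a\,b\,c\,d$ with $\phi(a)=\phi(b)=\phi(c)=\phi(d)=1$, $\cost(ab)=\cost(bc)=0.2$, and $\cost(cd)=3.9$. At time $0.1$ the moats $\{a\},\{b\},\{c\}$ merge into $\{a,b,c\}$; $\{d\}$ deactivates at time~$1$; $\{a,b,c\}$ deactivates at time~$2.8$, at which point the dual across $cd$ is $0.1+2.7+1=3.8<3.9$, so $cd$ never goes tight and $Z=\{ab,bc\}$. With $H=\{cd\}$, your $U$ is $\{a,b,c,d\}$ and $\sum_{v\in U}\phi(v)=4>3.9=\cost(H)$, violating~(2a). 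The lemma itself is fine---taking $U=\{d\}$ gives $\sum\phi=1\le 3.9$ and still satisfies~(2b)---but your $U$ is too large. The ``standard GW moat-charging trick'' you invoke bounds $\cost(Z)$ in terms of $\sum_C y_C$; it does not directly give the potential-versus-$\cost(H)$ bound you need here. In BHM the set~$U$ is chosen more carefully (roughly, only one side of each crossing is placed in~$U$), and you should consult their argument to close this step.
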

\end{shortversion}

For each region~$R$ containing a hole, our algorithm does the following:
It contracts the lifts of every skeleton edge in $R^2$ to get the
graph~$\hat{R}^2$.
For any vertex $v$ in $\hat{R}^2$, let $\cost(v)$ be the total cost of edges contracted to
create $v$ (implicitly, if $v$ appears in $R^2$ as well, then $\cost(v) = 0$).
For each $v$ in $\hat{R}^2$, the algorithm assigns a potential $\phi(v) = \eps^{-1} \cost(v)$.
The algorithm them applies the PC-clustering procedure of Lemma~\ref{lem:PC-clustering} to
$\hat{R}^2$ to get the set of edges~$Z$.

Let the \emph{well-connected cover graph} be the set of boundary in $R^2$ unioned with $Z$.
The edges in the well-connected cover graph are the \emph{well-connected cover edges}.
The projections of the well-connected cover edges will be used in the spanner.
\begin{lemma}\label{lem:clustering-edges-cheap}
The total cost of all well-connected cover edges is $O(\eps^{-2} \opt \log W)$.
\end{lemma}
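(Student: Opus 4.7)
The plan is to split the total well-connected cover edge cost into two parts: (i) the boundary edges of each cyclic double cover $R^2$, and (ii) the PC-clustering set $Z$ produced for each $\hat{R}^2$. I will bound each part separately by the skeleton cost and then invoke Lemma~\ref{lem:skeleton-cost}.

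For part (i), recall from the construction of $R^2$ that every skeleton edge on the boundary of $R$ appears exactly twice on the boundary of $R^2$ (the outer boundary $C_0$ appears twice along the outer face of $R^2$; the largest hole $C_i$ appears twice along a single hole; and every other hole appears as two separate holes). Hence the cost of the boundary of $R^2$ is exactly $2\,\cost(\partial R)$. Summing over all regions $R$ that contain a hole, each skeleton edge lies on at most two region boundaries (the region it bounds from the outside and the region it bounds from the inside), so the total boundary cost over all cyclic double covers is at most $4\,\cost(\text{skeleton}) = O(\eps^{-1}\opt\log W)$ by Lemma~\ref{lem:skeleton-cost}.

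For part (ii), I will apply Lemma~\ref{lem:PC-clustering} to $\hat{R}^2$ with the potential $\phi(v) = \eps^{-1}\cost(v)$ assigned in the construction. The lemma gives $\cost(Z) \le 2\sum_v \phi(v) = 2\eps^{-1}\sum_v \cost(v)$. The key observation is that the only edges of $R^2$ that were contracted to form the vertices of $\hat{R}^2$ are the lifts of skeleton edges, all of which lie on the boundary of $R^2$. Thus $\sum_v \cost(v)$ equals the total cost of lifted skeleton edges in $R^2$, which by the argument of part (i) equals $2\,\cost(\partial R)$. So $\cost(Z) \le 4\eps^{-1}\cost(\partial R)$. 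Summing again over all regions with the double-counting bound on skeleton edges gives $\sum_R \cost(Z) = O(\eps^{-1}\cost(\text{skeleton})) = O(\eps^{-2}\opt\log W)$, again by Lemma~\ref{lem:skeleton-cost}. Combining parts (i) and (ii) yields the claimed $O(\eps^{-2}\opt\log W)$ bound, with part (ii) dominating.

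The main conceptual obstacle is making the accounting precise across three graphs: the region $R$ sitting inside $G$, its cyclic double cover $R^2$, and the contracted graph $\hat{R}^2$ on which PC-clustering is actually applied. In particular, I have to be careful to (a) identify which edges of $R^2$ get contracted (only skeleton lifts), (b) observe that these are precisely the boundary edges of $R^2$ so that $\sum_v \cost(v)$ is a boundary quantity, and (c) bound how often a single skeleton edge can be charged when summing over all regions. Once these bookkeeping steps are nailed down, the remainder is just chaining together Lemma~\ref{lem:PC-clustering} and Lemma~\ref{lem:skeleton-cost}.
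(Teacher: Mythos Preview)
Your proposal is correct and follows essentially the same approach as the paper's proof: bound the contracted skeleton edges in each $R^2$ by noting each skeleton edge lies on at most two region boundaries and is lifted twice, then multiply by $\eps^{-1}$ for the potentials and apply the first guarantee of Lemma~\ref{lem:PC-clustering}, finally invoking Lemma~\ref{lem:skeleton-cost}. Your write-up is in fact slightly more careful than the paper's, as you explicitly separate and bound the boundary edges of $R^2$ (part~(i)) in addition to the PC-clustering output $Z$ (part~(ii)), whereas the paper only explicitly treats the latter.
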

\begin{proof}
By Lemma~\ref{lem:skeleton-cost}, the total cost of all cycles in the skeleton is~$O(\eps^{-1}
\opt \log W)$.
Every cycle in the skeleton appears on the boundary of at most two regions.
For each boundary of a region, each edge of the boundary appears at twice in that
region's cyclic double cover.
Therefore, the sum of vertex potentials used for PC-clustering across all cyclic double covers is
at most~$O(\eps^{-2} \opt \log W)$.
The lemma follows from the first property of PC-clustering's output as defined in
Lemma~\ref{lem:PC-clustering}.
\end{proof}

We argue there exists a near-optimal solution such that shortcuts do not start and end on
different components of the well-connected cover graphs.
\begin{shortversion}
\begin{lemma}\label{lem:common-boundary-endpoints}
There exists a solution with cost at most $(1+2\eps)\opt$ enclosing exactly $bW$ weight such that
for each cycle $K$ in the solution,
either $K$ does not cross the skeleton
or $K$ has a path decomposition $p_0,p_1,\dots$ such that a lift of each path $p_i$ in a 
region with a hole has both endpoints on the same component of that region's well-connected
cover graph.
\end{lemma}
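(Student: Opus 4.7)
The plan is to construct the required solution $O'$ from an optimal $b$-bipartition $O$, making controlled region-by-region modifications while paying at most $2\eps\,\opt$ in extra cost. The PC-clustering guarantee is the main lever: we will use it to argue that any path of the optimal decomposition whose lifted endpoints straddle two components of the well-connected cover can be eliminated by absorbing a cheap boundary component into the solution.

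For each region $R$ containing a hole, form a subgraph $H_R \subseteq \hat{R}^2$ as follows. For every cycle $K$ of $O$ that crosses the skeleton and every path $p$ of its path decomposition contained in $R$, lift $p$ to a walk $p'$ in $R^2$ via Lemma~\ref{lem:walk-lift}, then view its image in $\hat{R}^2$ (after contracting the lifted skeleton edges) as a walk, and let $H_R$ be the union of all such walks. Each non-skeleton edge of $O$ lies in exactly one region and each lift preserves cost, so $\sum_R \cost(H_R) \le \opt$. Apply Lemma~\ref{lem:PC-clustering} to each $\hat{R}^2$ with potentials $\phi(v) = \eps^{-1}\cost(v)$, obtaining the PC-clustering edges $Z_R$ (the non-boundary part of the well-connected cover graph) and a vertex set $U_R$ with $\sum_{v \in U_R} \phi(v) \le \cost(H_R)$. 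Each $v \in U_R$ corresponds to a boundary component of $R^2$ whose projection $\gamma_v$ is a closed walk on the skeleton, and
\[
\sum_R \sum_{v \in U_R}\cost(\gamma_v) \;\le\; \sum_R \sum_{v \in U_R}\cost(v) \;=\; \eps \sum_R\sum_{v\in U_R}\phi(v) \;\le\; \eps\, \opt.
\]

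I would then build $O'$ by taking the symmetric difference of $O$ with every such $\gamma_v$, so that $\cost(O') \le (1+\eps)\,\opt$. For any cycle $K'$ of $O'$ crossing the skeleton and any path $q$ of its decomposition in~$R$, the lifted endpoints of $q$ cannot lie on a boundary indexed by $U_R$: those crossings are exactly the ones removed by the surgery. Property~(b) of Lemma~\ref{lem:PC-clustering} applied to $H_R$ (augmented by the bounded-cost walks introduced by the surgery, which do not exceed $H_R$'s cost budget) then guarantees that both endpoints of $q$'s lift lie in a single component of $Z_R$, hence in a single component of the well-connected cover graph.

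The hard part will be maintaining the weight condition \emph{exactly} $bW$: symmetric-differencing with skeleton cycles can shift weight between the two sides of the bipartition by an arbitrary amount. To restore balance, I would exploit Lemma~\ref{lem:no-low-ratio-remaining}: outside the largest hole of each region the skeleton's construction forbids cycles of ratio at most $\eps^{-1}\lambda$, so any remaining imbalance is localized and bounded by the cost of the $\gamma_v$'s via their skeleton-ratio bound. This lets us append a rebalancing correction of total cost at most $\eps\,\opt$ that swaps a carefully chosen set of faces, restoring the enclosed weight to exactly $bW$ and yielding the final bound $(1+2\eps)\,\opt$. Verifying that this correction can always be carried out within budget, and that it preserves the decomposition property established above, is the most delicate step of the argument.
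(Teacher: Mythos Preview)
Your overall setup---lifting the optimal paths to $\hat R^2$, applying the PC-clustering guarantee to obtain the sets $U_R$, and bounding $\sum_R\sum_{v\in U_R}\cost(\gamma_v)\le\eps\,\opt$---matches the paper. The divergence, and the gap, is in how you build $O'$ from $O$ and the cycles $\gamma_v$.

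You propose taking the symmetric difference of $O$ with the $\gamma_v$'s and then patching the resulting weight imbalance by a separate ``rebalancing correction'' of cost at most $\eps\,\opt$. This does not work. Each $\gamma_v$ is a skeleton cycle, and skeleton cycles satisfy $\ratio\le\eps^{-1}\lambda$, which says $\weight_G(\gamma_v)\ge\eps\,\cost(\gamma_v)/\lambda$---a \emph{lower} bound on the enclosed weight, not an upper bound. A single symmetric-difference step can therefore shift $\Theta(W)$ weight across the bipartition, and Lemma~\ref{lem:no-low-ratio-remaining} (which concerns cycles \emph{inside} a region that avoid the heaviest hole, not skeleton cycles) gives you no handle on this. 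There is no reason a correction of cost $\eps\,\opt$ can restore exact balance, and your sketch gives no mechanism for producing one.

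The paper avoids this problem entirely by never disturbing the enclosed face set. It lets $\mathcal U$ be the set of skeleton cycles arising from the $U_R$'s; since these are mutually noncrossing, they carve $G$ into $\mathcal U$-regions. For each $\mathcal U$-region $R'$ it takes $O_{R'}$ to be the boundary of the faces of $R'$ that were enclosed by $O$, and sets $O'=\bigcup_{R'} O_{R'}$. By construction $O'$ encloses exactly the same faces as $O$, so the weight is $bW$ with no rebalancing needed; no cycle of $O'$ crosses any member of $\mathcal U$, which is precisely what lets the PC-clustering guarantee finish the argument; and the only new edges are at most two copies of each cycle in $\mathcal U$, giving the $2\eps\,\opt$ cost increase directly. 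Replacing your symmetric-difference-plus-rebalance step with this region-by-region reconstruction closes the gap.
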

\end{shortversion}
\begin{longversion}
\begin{lemma}\label{lem:common-boundary-endpoints}
There exists a solution with cost at most $(1+2\eps)\opt$ enclosing exactly $bW$ weight such that
for each cycle $K$ in the solution,

\begin{enumerate}
\item
either $K$ does not cross the skeleton
\item
or $K$ has a path decomposition $p_0,p_1,\dots$ such that a lift of each path $p_i$ in a region with a hole has both endpoints on the same component of that region's well-connected
cover graph.
\end{enumerate}
\end{lemma}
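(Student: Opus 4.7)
The plan is to produce $O'$ as a modification of the optimum $O$ in two stages: first, a symmetric-difference operation on the dual edge set using a carefully chosen family of skeleton cycles identified by PC-clustering; second, a small weight correction to restore the exact $bW$ enclosure.

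In the first stage, for each region $R$ containing a hole, collect the lifts (via Lemma~\ref{lem:walk-lift}) of every path in every path decomposition of every cycle of $O$ restricted to $R$, project them into $\hat{R}^2$, and let $H_R$ be the resulting subgraph. Across all regions, $\sum_R \cost(H_R) \leq \cost(O) = \opt$. Apply Lemma~\ref{lem:PC-clustering} to $(\hat{R}^2, H_R)$ with the potential $\phi(v) = \eps^{-1}\cost(v)$, obtaining PC-clustering edges $Z_R$ (so that $Z_R$ together with the boundary of $R^2$ forms the well-connected cover graph for $R$) and a vertex set $U_R$ with $\sum_{v\in U_R}\phi(v) \leq \cost(H_R)$. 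Interior vertices have potential zero and can be dropped from $U_R$, so $U_R$ corresponds to a family $\mathcal{B}_R$ of skeleton cycles on $R$'s boundary with $\sum_R \cost(\mathcal{B}_R) \leq \eps\opt$. Form $O^{\sharp}$ by taking the symmetric difference of the dual edge set of $O$ with every cycle in $\bigcup_R \mathcal{B}_R$. The absorbed skeleton cycles appearing in $O^{\sharp}$ no longer cross the skeleton (satisfying the first bullet trivially), while for each surviving cycle $K^{\sharp}$, the absorption at a crossing vertex $u$ re-pairs, in the planar embedding, the two edges of the original cycle of $O$ at $u$ with the two edges of the absorbed cycle at $u$, producing cycles that no longer cross the skeleton at $u$. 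Hence no path decomposition of $O^{\sharp}$ has an endpoint whose lift lies in $U_R$, and property~(2) of Lemma~\ref{lem:PC-clustering} delivers the same-component guarantee in the well-connected cover graph.

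In the second stage, the symmetric-difference operation shifts the enclosed weight by some $\Delta$. Since each cycle of $\mathcal{B}_R$ has cost-to-weight ratio at most $\eps^{-1}\lambda$ (it belongs to the skeleton) and $\sum_R \cost(\mathcal{B}_R) \leq \eps\opt$, the total weight enclosed by cycles of $\bigcup_R\mathcal{B}_R$ is at most $\eps^2\opt/\lambda = O(\eps W)$, so $|\Delta| = O(\eps W)$. A rebalancing step—XOR'ing $O^{\sharp}$ with a low-ratio cycle whose enclosed weight equals $\Delta$, whose existence and bounded cost are ensured by Lemma~\ref{lem:no-low-ratio-remaining} together with the skeleton's structural properties—restores the exact $bW$ enclosure at an added cost of at most $\eps\opt$. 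The final $O'$ therefore has cost at most $(1+2\eps)\opt$.

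The main technical obstacle is the absorption argument: one must rigorously verify that, after taking symmetric difference with every cycle of $\bigcup_R\mathcal{B}_R$ simultaneously, every crossing of $O^{\sharp}$ with the skeleton lies at a vertex whose lift avoids $U_R$, so that property~(2) of Lemma~\ref{lem:PC-clustering} can be invoked. This requires careful combinatorial case-analysis of how the four edges at each absorbed crossing vertex re-pair in the planar embedding, together with an argument that no new $U_R$-crossings are created. A secondary subtlety is constructing the weight-correction cycle of appropriate weight while keeping its cost bounded by $\eps\opt$; this uses the residual bounded-ratio structure guaranteed by Lemma~\ref{lem:no-low-ratio-remaining} within the region where the weight shift is concentrated.
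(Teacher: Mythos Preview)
Your proposal has two genuine gaps, and both stem from the choice to use symmetric difference rather than the paper's reconstruction.

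\textbf{The weight-shift bound is wrong.} You assert that because each skeleton cycle $C\in\mathcal{B}_R$ has ratio at most $\eps^{-1}\lambda$, the total weight they enclose is at most $\eps^2\opt/\lambda$. But the ratio inequality $\cost(C)/\weight_G(C)\le\eps^{-1}\lambda$ gives a \emph{lower} bound on $\weight_G(C)$, not an upper bound. A skeleton cycle can enclose nearly all of $W$ while having tiny cost; XOR'ing with it can flip almost the entire enclosure. So $|\Delta|$ is not controlled by $O(\eps W)$, and your second stage cannot get started.

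\textbf{The rebalancing cycle need not exist.} Even granting a bound on $|\Delta|$, you invoke Lemma~\ref{lem:no-low-ratio-remaining} to produce a low-ratio cycle of prescribed weight $\Delta$. That lemma says exactly the opposite: after skeleton construction, regions \emph{do not} contain low-ratio cycles (except ones wrapping the heaviest hole). There is no mechanism in the paper guaranteeing a cycle of any specified weight with cost $\le\eps\opt$, and such a cycle would also have to satisfy the lemma's own structural conclusion. This step is unsupported.

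The paper avoids both problems by never disturbing the enclosed face set. Having identified the bad skeleton cycles $\mathcal{U}$, it partitions the plane into $\mathcal{U}$-regions and, within each such region $R'$, replaces $O$ by the boundary of the faces of $R'$ that $O$ already encloses. This new solution $O'$ encloses \emph{exactly} the same faces as $O$, so the weight is exactly $bW$ with no correction needed; the only cost increase comes from at most two copies of each cycle in $\mathcal{U}$ appearing along region boundaries, giving the $2\eps\opt$ overhead directly. The ``no crossing $\mathcal{U}$'' property is then automatic, since each piece of $O'$ lives inside a single $\mathcal{U}$-region, and this is what makes property~(2b) of Lemma~\ref{lem:PC-clustering} applicable. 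Your absorption-by-XOR idea is aiming at the same endpoint but does not deliver weight preservation, which is the crux here.
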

\end{longversion}
\begin{proof}
Consider the optimal solution $O$.
For each region $R$ with a hole, let $O_R$ be the subset of edges from $O$ that lie
strictly within the region $R$.
Consider a lift of $O_R$ to $R^2$, and let $\hat{O}_R$ be the edges that remain after performing
contractions to get $\hat{R}^2$.
Let $\hat{U}_R$ be the set of vertices in $\hat{R}^2$ that are guaranteed to exist for $\hat{O}_R$ by the
second property in Lemma~\ref{lem:PC-clustering}.


Each vertex of $\hat{U}_R$ is the result of contracting zero or more lifted skeleton cycles in
$R^2$.
Let $\mathcal{U}$ be the set of all skeleton cycles where for each cycle $C \in \mathcal{U}$, the
contraction of a lift of $C$ lies in some $\hat{U}_R$.
The cycles in $\mathcal{U}$ are mutually non-crossing, so they partition the faces of~$G$ into
\emph{$\mathcal{U}$-regions}.
For each $\mathcal{U}$-region $R'$, let~$O_{R'}$ be the boundary of faces in~$R'$ enclosed by~$O$
(therefore, the holes of~$R'$ are not enclosed by~$O_{R'}$).
Let~$O'$ be the union of cycles over all~$O_{R'}$.
Solution~$O'$ encloses the same set of faces of~$G$ as~$O$.
No cycle of~$O'$ crosses a member of~$\mathcal{U}$.
Edges strictly internal to some $\mathcal{U}$-region are used exactly once and only if they are
used in~$O$.
Finally, each cycle~$C \in \mathcal{U}$ may contribute up to two copies of some of its edges
to~$O'$, because $C$ lies on the boundary of two $\mathcal{U}$-regions.

The cost difference between $O'$ and $O$ is at most twice the cost of cycles in $\mathcal{U}$.
For a single region $R$, the total potential of vertices in $\hat{U}_R$ is at most the cost
of $\hat{O}_R$.
By definition of vertex potentials, the cycles of $\mathcal{U}$ that contribute to set $\hat{U}_R$
have total cost at most $\eps \cost(\hat{O}_R)$.
All sets $\hat{O}_R$ are edge-disjoint, so twice the total cost of all cycles in $\mathcal{U}$ is at
most $2 \eps  \opt$.

Now, consider any cycle $K$ in $O'$ that crosses the skeleton, and let $p_0,p_1,\dots$ be a path
decomposition for $K$.
The endpoints of each path $p_i$ lie on skeleton cycles crossed by $K$.
For any $p_i$ lying in a region~$R$ with a hole, let $p'_i$ be a lift of path $p_i$
to $R^2$.
Let $p^/_i$ be the path that results from $p'_i$ after contracting edges to make
$\hat R^2$.
Finally, let $v_1$ and $v_2$ be the endpoints of $p^/_i$.
Neither $v_1$ nor $v_2$ lie on a member of $\hat{U}_R$ as defined above, because the cycles of
$O'$ do not cross any members of $\mathcal{U}$.
Further, every edge of $p'_i$ that is not contracted is a member of $\hat{O}_R$.
Therefore, Lemma~\ref{lem:PC-clustering} guarantees $v_1$ and $v_2$ lie in the same component of
PC-clustering's output.
Each vertex of $\hat R^2$ is a connected component of the boundary of $R^2$ so the endpoints of
$p'_i$ lie on the same component of the well-connected cover graph as well.
\end{proof}

\begin{longversion}
\subsection{Finding shortcuts}
\end{longversion}

For each cyclic double cover $R^2$, for each well-connected cover component, our spanner
algorithm computes edges carrying shortcuts between every pair of vertices on the component.
It does so using the following extension of the boundary-to-boundary spanner of
Lemma~\ref{lem:boundary-spanner}.
The projection of these edges is added to our spanner.
\begin{shortversion}
The proof of the following Lemma is omitted.
\end{shortversion}
\begin{lemma}\label{lem:boundary-to-boundary}
Let $G$ be a planar graph with non-negative edge costs $\cost(\cdot)$.
Let $A$ be a component of $G$.
Then for any $\eps > 0$, there is an $O(n \log n)$ time algorithm to compute a
subgraph $H$ of $G$ where $\cost(H) = O(\eps^{-4} \cost(A))$ and for any pair of vertices $u$ and $v$ on
$A$, $\dist_H(u,v) \leq (1+\eps)\dist_G(u,v)$.
\end{lemma}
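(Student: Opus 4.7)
The plan is to reduce to Lemma~\ref{lem:boundary-spanner} by cutting $G$ along a spanning tree of $A$ so that the vertices of $A$ all lie on the boundary of a single face. First, compute a spanning tree $T$ of $A$ (so $\cost(T) \le \cost(A)$). Construct a planar graph $G^T$ from $G$ by cutting along $T$: each vertex $v \in V(T)$ is replaced by as many copies as there are visits to $v$ in an Euler tour around $T$, and the edges of $G$ incident to $v$ are partitioned among these copies according to the planar embedding. Each edge of $T$ becomes two parallel copies in $G^T$, and together these copies form the boundary walk $\partial f^*$ of a new face $f^*$ with $\cost(\partial f^*) = 2\cost(T) \le 2\cost(A)$. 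Every vertex of $A$ has at least one copy on $\partial f^*$ since $T$ spans $A$.

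Next, apply Lemma~\ref{lem:boundary-spanner} to $G^T$ and the face $f^*$ to obtain a subgraph $H^T$ of $G^T$ with $\cost(H^T) = O(\eps^{-4}\cost(A))$ that preserves distances between vertices on $\partial f^*$ up to a $(1+\eps)$ factor. Let $H$ be the projection of $H^T$ back to $G$, obtained by identifying copies of each vertex and edge of $T$. Then $\cost(H) \le \cost(H^T) = O(\eps^{-4}\cost(A))$, and the overall running time is $O(n\log n)$ since the cut and the projection are linear-time operations, and Lemma~\ref{lem:boundary-spanner} itself runs in $O(n\log n)$.

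For correctness, fix $u,v \in V(A) \subseteq V(T)$ and let $P$ be a shortest $u$-to-$v$ path in $G$. Starting from any copy $u'$ of $u$ on $\partial f^*$, lift $P$ edge-by-edge to a walk $P'$ in $G^T$; whenever $P$ traverses an edge of $T$, we choose the copy on the appropriate side. The lifted walk $P'$ ends at some copy $v'$ of $v$ on $\partial f^*$ and satisfies $\cost(P') = \cost(P) = \dist_G(u,v)$, so $\dist_{G^T}(u',v') \le \dist_G(u,v)$. By Lemma~\ref{lem:boundary-spanner}, there is a $u'$-to-$v'$ walk in $H^T$ of cost at most $(1+\eps)\dist_G(u,v)$; its projection is a $u$-to-$v$ walk in $H$ of the same cost, giving $\dist_H(u,v) \le (1+\eps)\dist_G(u,v)$.

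The main obstacle is making the cut operation precise and verifying: (i) $G^T$ remains planar, which is standard since cutting along a tree preserves planarity; (ii) each vertex of $A$ has a copy on $\partial f^*$, immediate from the fact that $T$ spans $A$; and (iii) the lifting of a $G$-walk to a $G^T$-walk of equal cost is well-defined, which requires careful tracking of the cyclic orderings around vertices of $T$ and of which ``side'' of $T$ we are on at each step. None of these is deep, but making them fully rigorous in the presence of vertices of $T$ that may be revisited many times during the Euler tour is the bulk of the work.
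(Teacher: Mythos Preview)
Your high-level plan---cutting the graph open along a connected piece of $A$ so that all vertices of $A$ land on a single face boundary, then invoking Lemma~\ref{lem:boundary-spanner}---is sound, and it differs from the paper's route: the paper cuts along \emph{all} of $A$, producing several planar pieces and applying Lemma~\ref{lem:boundary-spanner} once per resulting boundary cycle, whereas you cut along only a spanning tree $T$ of $A$ and apply the lemma once.

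However, your lifting step (iii) has a real gap. When you cut along $T$, each vertex $w$ of $T$ splits into $\deg_T(w)$ copies, and the non-tree edges incident to $w$ are distributed among these copies according to the rotation at $w$. A shortest $u$-to-$v$ path $P$ in $G$ that passes through some $w\in T$ via two non-tree edges lying in different sectors of that rotation does \emph{not} lift to a walk in $G^T$: the incoming and outgoing edges are attached to different copies of $w$. Thus your claim that $P$ lifts to a walk $P'$ in $G^T$ with $\cost(P')=\cost(P)$ is false in general; bridging the two copies along $\partial f^*$ would add the cost of tree edges you have not accounted for.

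The fix is to decompose $P$ at every vertex of $T$ it meets, obtaining subpaths $P_1,\dots,P_k$, each with endpoints on $T$ and no internal vertex on $T$. Each $P_i$ does lift to a path of the same cost in $G^T$ between two vertices of $\partial f^*$; Lemma~\ref{lem:boundary-spanner} then supplies a $(1+\eps)$-approximate replacement in $H^T$; projecting and concatenating (which is legitimate because projection re-identifies all copies of each $w\in T$) yields the desired $u$-to-$v$ walk in $H$. This is exactly the decomposition argument the paper uses for its own cut (``a shortest path that does cross a cycle $C$ of the cut open surface is the concatenation of shortest paths that do not cross any such cycle''). Once you make this correction, either cutting construction works.
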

\begin{longversion}
\begin{proof}
The edges and vertices of~$A$ partition the plane into one or more components.
The algorithm cuts the planar graph~$G$ along $A$, separating these components so that the edges
of~$A$ lie on their boundary.
For each boundary cycle $C$ of the cut open graph, the algorithm runs the boundary-to-boundary
spanner procedure of Lemma~\ref{lem:boundary-spanner} to create a subgraph $H_C$ of $G$ of cost at
most $O(\eps^{-4} \cost(C))$ in $O(n_C \log n_C)$ time where $n_C$ is the number of vertices in
$C$'s component of the cut open planar graph.
Subgraph $H$ is the union of all such subgraphs $H_C$.
Each edge of $A$ appears on boundary twice, so the total cost of all subgraphs $H_C$ is at most
$O(\eps^{-4} \cost(A))$.
The cut open surface has $O(n)$ vertices total, so the running time for the procedure is $O(n
\log n)$ total.
Finally, the boundary-to-boundary shortcut algorithm guarantees that for any pair of vertices $u$
and $v$ on $A$ where the shortest path does not cross any cycle $C$ of the cut open surface, we
have $\dist_H(u,v) \leq (1+\eps)\dist_G(u,v)$.
This proves the lemma since a shortest path that does cross a  cycle $C$ of the cut open
surface is the concatenation of shortest paths that do not cross any
such cycle.
\end{proof}
\end{longversion}

\begin{lemma}\label{lem:shortcuts-cheap}
The total cost of all edges carrying shortcuts is $O(\eps^{-6} \opt \log W)$.
\end{lemma}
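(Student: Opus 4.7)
The plan is to bound the shortcut cost component-by-component and then sum over all cyclic double covers, leveraging the two prior cost bounds we have already established.

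First, I would fix a region $R$ that contains a hole, consider its cyclic double cover $R^2$, and consider a single well-connected cover component $A$. By construction, the spanner carries shortcuts between every pair of vertices on $A$, and these are computed by invoking Lemma~\ref{lem:boundary-to-boundary} with the component $A$ playing the role of the distinguished component. That lemma immediately yields a subgraph of cost $O(\eps^{-4}\cost(A))$. So for this component the shortcut edges contribute at most $O(\eps^{-4}\cost(A))$ to the spanner's total cost.

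Next, I would sum this bound over all well-connected cover components, across all regions with holes. Since every shortcut edge is paid for inside some component, and Lemma~\ref{lem:boundary-to-boundary} gives a per-component bound linear in the component's own cost, the total shortcut cost is at most
\[
O(\eps^{-4}) \cdot \sum_{R \text{ with hole}} \sum_{A \text{ component of well-connected cover of } R^2} \cost(A).
\]
The inner double sum is at most the total cost of all well-connected cover edges across all cyclic double covers, since each well-connected cover edge appears in at most one component of its ambient cover graph.

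Finally, I would invoke Lemma~\ref{lem:clustering-edges-cheap}, which bounds the total cost of all well-connected cover edges by $O(\eps^{-2}\opt\log W)$. Multiplying by the $O(\eps^{-4})$ factor from the boundary-to-boundary spanner, the total cost of all shortcut edges is $O(\eps^{-6}\opt\log W)$, as claimed. The argument is essentially a one-line composition of the two prior lemmas, so there is no real obstacle; the only point that needs a touch of care is verifying that the accounting is disjoint, i.e. that each well-connected cover edge is charged only once when summing $\cost(A)$ over components $A$, which is immediate from the definition of a component.
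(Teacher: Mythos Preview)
Your argument correctly handles the dominant case and matches the paper's approach for regions with holes: apply Lemma~\ref{lem:boundary-to-boundary} component-by-component to get an $O(\eps^{-4})$ blowup of the well-connected cover cost, then invoke Lemma~\ref{lem:clustering-edges-cheap} for the $O(\eps^{-2}\opt\log W)$ bound on that cost.

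However, you have omitted one case. Shortcuts are also computed for regions \emph{without} holes (see Line~\ref{line:shortcuts-simple} of Algorithm~\ref{alg:spanner}), and for those regions there is no cyclic double cover and no PC-clustering; the algorithm applies the boundary-to-boundary spanner of Lemma~\ref{lem:boundary-spanner} directly to the single boundary cycle $C$ of $R$. The paper's proof covers this case separately: by Lemma~\ref{lem:skeleton-cost} the total cost of all such boundary cycles is $O(\eps^{-1}\opt\log W)$, and the $O(\eps^{-4})$ blowup from Lemma~\ref{lem:boundary-spanner} then gives $O(\eps^{-5}\opt\log W)$, which is absorbed into the stated bound. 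This is an easy addition and does not change the overall structure of your argument, but as written your accounting is incomplete.
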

\begin{proof}
By Lemma~\ref{lem:skeleton-cost}, the total cost of boundaries for regions without a 
hole is $O(\eps^{-1} \opt \log W)$.
By Lemma~\ref{lem:clustering-edges-cheap}, the total cost of all subgraphs $A$ used in
Lemma~\ref{lem:boundary-to-boundary} for regions with holes is $O(\eps^{-2} \opt
\log W)$.
The current lemma follows from Lemmas~\ref{lem:boundary-spanner}
and~\ref{lem:boundary-to-boundary}.
\end{proof}

\section{The spanner}

In this section, we describe the final spanner construction for our algorithm and prove the
construction follows the spanner properties.
The spanner construction is summarized as Algorithm~\ref{alg:spanner}.
\begin{algorithm} \caption{Spanner$(G)$}
\label{alg:spanner}
\begin{algorithmic}[1]
\STATE $\mathcal{S} \leftarrow \emptyset$ ; 
$\mathcal{C} \leftarrow \text{Skeleton}(G,\lambda)$
\STATE add to $\mathcal{S}$ the edges of~$\mathcal{C}$ \label{line:include-skeleton}
\STATE for each region~$R$ without a hole, add to $\mathcal{S}$ the edges carrying shortcuts
in~$R$ \label{line:shortcuts-simple}
\STATE for each region~$R$ with a hole, add to $\mathcal{S}$ the projection of the well-connected
cover edges from~$R^2$ \label{line:well-connected}
\STATE for each region~$R$ with a hole, add to $\mathcal{S}$ the projection of the edges carrying
shortcuts in~$R^2$ \label{line:shortcuts-cover}
\STATE for each region~$R$ with a hole, add to $\mathcal{S}$ the edges on the cheapest cycle
in~$R$ that encloses the heaviest hole of~$R$ other than the outer boundary of~$R$
\label{line:shortest}
\end{algorithmic}
\end{algorithm}
The cheapest cycle other than the outer boundary enclosing a particular hole of a region~$R$ can
be computed in polynomial time using several instantiations of any polynomial time minimum
$s,t$-cut algorithm~\cite{ItalianoNSW11}.

To prove that our algorithm computes a spanner, we will iteratively replace cycles and their paths
in a near-optimal solution with ones that lie in the spanner.
The following lemmas will help us bound the total change in weight and cost from performing these
operations.

\begin{lemma}\label{lem:replace-empty-cycle}
Let $O$ be a set of cycles.
Let $K \in O$ be contained by region~$R$, and let $\ratio_G(K) > \eps^{-1}\lambda$.
Removing $K$ from $O$ changes the enclosed weight of $O$ by at most $\eps \cost(K)/\lambda$ and
does not increase the cost of $O$. 
\end{lemma}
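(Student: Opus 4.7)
The plan is to unwind the definitions of $\ratio_G$, of $\weight_G$, and of ``enclosed by'' for a multiset of closed walks, and then observe that the change in enclosure caused by removing a single cycle is controlled exactly by the weight enclosed by that cycle.

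First I would note that since $\ratio_G(K) = \cost(K)/\weight_G(K)$ and $\ratio_G(K) > \eps^{-1}\lambda$, we immediately get $\weight_G(K) < \eps\, \cost(K)/\lambda$. So the entire task reduces to showing that the change in enclosed weight when we delete $K$ from $O$ is at most $\weight_G(K)$, i.e.\ at most the total weight of faces enclosed by $K$ alone.

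Next, I would use the parity definition of enclosure from the Preliminaries: a face $f$ is enclosed by a multiset $C$ of darts iff $|P \cap C^*| - |P \cap \rev(C^*)|$ is odd for some (any) $f_\infty$-to-$f$ path $P$ in $G^*$. Deleting the dart multiset of $K$ from the dart multiset of $O$ changes this quantity by $|P \cap K^*| - |P \cap \rev(K^*)|$, whose parity is odd exactly for those faces enclosed by $K$. Hence the set of faces whose enclosure status flips when $K$ is removed is precisely the set of faces enclosed by $K$, and no other face has its enclosure altered. Therefore the total weight enclosed by $O$ changes by at most $\weight_G(K) < \eps\, \cost(K)/\lambda$, which is the desired bound.

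Finally, for the cost claim, the cost of $O$ (as a multiset of darts, or equivalently of its cut edges in $G^*$) is additive over the cycles in $O$, so removing $K$ decreases the cost by exactly $\cost(K) \ge 0$ and hence does not increase it. There is no real obstacle here; the only point requiring a little care is the parity/enclosure bookkeeping when subtracting a single cycle from $O$, which is handled cleanly by the definition already set up in Section~\ref{sec:prelim}.
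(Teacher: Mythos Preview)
Your proof is correct and is exactly the argument the paper has in mind; in fact the paper states this lemma without proof, treating it as immediate from the definitions of $\ratio_G$ and enclosure in Section~\ref{sec:prelim}. Your unpacking of the parity definition to see that removing $K$ flips enclosure precisely on the faces enclosed by $K$, together with $\weight_G(K) < \eps\,\cost(K)/\lambda$ from the ratio bound, is the intended one-line justification.
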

\begin{lemma}\label{lem:replace-enclosing-cycle}
Let $O$ be a set of cycles.
Let $K \in O$ be strictly contained by region $R$, and let $\ratio_G(K) \leq \eps^{-1}\lambda$.
Cycle $K$ encloses the largest hole of $R$.
Further, replacing $K$ with the shortest cycle other than the outer boundary of~$R$ that encloses
the largest hole of~$R$ changes the weight enclosed by $O$ by at most $\eps \cost(K)/\lambda$ and
does not increase the cost of $O$.
\end{lemma}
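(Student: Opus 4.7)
The plan is to break the proof into three pieces: first, the claim that $K$ encloses the heaviest hole of $R$; second, the cost-nonincrease statement; and third, the bound on how much the weight enclosed by $O$ can shift when we swap $K$ for its replacement $K'$.

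To handle the first claim, I would first show $K \notin \mathcal{C}$. Every skeleton cycle lies entirely on the boundary of the region it bounds or of its parent region, so if $K$ were a skeleton cycle and were contained in $R$, all of its edges would lie on the boundary of $R$, contradicting strict containment. Given $K \notin \mathcal{C}$, $K \subseteq R$, and $\ratio_G(K) \le \eps^{-1}\lambda$, Lemma~\ref{lem:no-low-ratio-remaining} directly gives both that $K$ encloses the heaviest hole $C_1$ of $R$ \emph{and} the quantitative inequality $\weight_S(K) < \eps \cost(K)/\lambda$ with $S$ denoting the faces enclosed by the outer boundary of $R$ but not by $C_1$. For the second claim, I would observe that $K$ itself is a candidate for the minimization defining $K'$: it encloses $C_1$, and it is distinct from the outer boundary of $R$ since $K$ is strictly contained in $R$. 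Therefore $\cost(K') \le \cost(K)$, and swapping $K$ for $K'$ in the cycle set $O$ cannot raise the cost.

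The remaining and main work is to bound the change in $\weight(O)$. The faces whose enclosure parity toggles are exactly the symmetric difference of the faces enclosed by $K$ and by $K'$. Because \emph{both} $K$ and $K'$ enclose $C_1$, the interior of $C_1$ is enclosed before and after and contributes nothing to the symmetric difference; so the toggled faces all lie in $S$, and the absolute change is at most $\weight_S(K) + \weight_S(K')$. The first summand was already handled above. For the second, I would split into two cases. If $K' = C_1$, then $\weight_S(K') = 0$. Otherwise $K'$ is strictly contained in $R$ (it encloses $C_1$ but is not the outer boundary, and by the same skeleton argument as for $K$ it is not in $\mathcal{C}$), so Lemma~\ref{lem:no-low-ratio-remaining} applies to $K'$ once I verify $\ratio_G(K') \le \eps^{-1}\lambda$; this in turn follows from $\cost(K') \le \cost(K)$ together with the lower bound $\weight_G(K') \ge \weight_G(C_1)$ and the splicing-pair inequality $\weight_G(\text{outer bdry of }R) < 2\,\weight_G(C_1)$ extracted from the proof of Lemma~\ref{lem:no-low-ratio-remaining}. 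Lemma~\ref{lem:no-low-ratio-remaining} then yields $\weight_S(K') < \eps\cost(K')/\lambda \le \eps\cost(K)/\lambda$.

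The principal obstacle is the second summand: one has to exclude the possibility that $K'$ encloses a lot of weight outside of $C_1$, which is what the ratio argument in Lemma~\ref{lem:no-low-ratio-remaining} delivers, but only after checking that $K'$ actually satisfies that lemma's hypothesis. The combined bound I obtain is $\weight_S(K) + \weight_S(K') < 2\eps \cost(K)/\lambda$; to match the stated bound of $\eps \cost(K)/\lambda$ exactly, I would simply run the overall construction with $\eps/2$ in place of $\eps$, which only affects constants downstream.
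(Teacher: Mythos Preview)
Your approach is essentially the paper's: use Lemma~\ref{lem:no-low-ratio-remaining} to see that $K$ encloses the heaviest hole $C_1$ and that $\weight_S(K)<\eps\cost(K)/\lambda$, observe $K$ itself is a candidate so $\cost(K')\le\cost(K)$, and then bound the weight shift by controlling $\weight_S(K)$ and $\weight_S(K')$.

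There is, however, a genuine gap in your treatment of $K'$. You claim $\ratio_G(K')\le\eps^{-1}\lambda$ follows from $\cost(K')\le\cost(K)$, $\weight_G(K')\ge\weight_G(C_1)$, and $\weight_G(A)<2\weight_G(C_1)$. Chaining these gives only
\[
\ratio_G(K')\le \frac{\cost(K)}{\weight_G(C_1)}<\frac{\eps^{-1}\lambda\,\weight_G(K)}{\tfrac12\weight_G(A)}\le 2\eps^{-1}\lambda,
\]
not $\eps^{-1}\lambda$. Your $\eps/2$ trick does not rescue this step, since replacing $\eps$ by $\eps/2$ rescales both the target bound and the hypothesis threshold by the same factor of $2$. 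The fix is simply that you do not need $\ratio_G(K')\le\eps^{-1}\lambda$ at all: if $\ratio_G(K')>\eps^{-1}\lambda$ then $\weight_S(K')\le\weight_G(K')<\eps\cost(K')/\lambda$ directly, while if $\ratio_G(K')\le\eps^{-1}\lambda$ (and $K'\notin\mathcal C$, which you correctly argue) Lemma~\ref{lem:no-low-ratio-remaining} gives $\ratio_S(K')>\eps^{-1}\lambda$. Either way $\weight_S(K')<\eps\cost(K')/\lambda\le\eps\cost(K)/\lambda$, and your sum bound goes through. This dichotomy is exactly what the paper invokes (tersely) when it writes ``either $\tilde\Delta\weight(K')=0$ or $K'$ is not in~$\mathcal C$; either way $\cost(K')/\tilde\Delta\weight(K')>\eps^{-1}\lambda$.''

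One small difference: the paper bounds the weight change by $\max(\weight_S(K),\weight_S(K'))$ rather than the sum, which is how it gets the constant $\eps$ rather than $2\eps$ without an $\eps/2$ rescaling.
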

\begin{proof}
By Lemma~\ref{lem:no-low-ratio-remaining}, we know $K$ must enclose the largest hole of $R$.
Let $A$ be the outer boundary of $R$ and $B$ be the largest hole of $R$.
Let~$K'$ be the shortest cycle other than~$A$ enclosing~$B$ in~$R$.
Replacing $K$ by a possibly cheaper cycle $K'$ cannot increase the cost of $O$.
Let $\Delta \weight(K)$ be the total weight of faces that move from being enclosed to not
enclosed and vice versa after the replacement.
Let $\tilde\Delta \weight(K)$ (respectively $\tilde\Delta \weight(K')$) be the total weight of
faces that are enclosed by $K$ ($K'$) but not enclosed by~$B$.
Suppose~$\tilde\Delta \weight(K) \geq \tilde\Delta \weight(K')$.
By Lemma~\ref{lem:no-low-ratio-remaining}, we have $\cost(K) / \tilde\Delta \weight(K) > \eps^{-1}
\lambda$.
In particular, $\Delta \weight(K) \leq \tilde\Delta \weight(K) < \eps \cost(K) / \lambda$.
If $\tilde\Delta \weight(K') > \tilde\Delta \weight(K)$, then either~$\tilde\Delta \weight(K') =
0$ or~$K'$ is not in $\mathcal C$.
Either way, we have $\cost(K') / \tilde\Delta \weight(K') > \eps^{-1} \lambda$, and $\Delta
\weight(K) \leq \tilde\Delta \weight(K') < \eps \cost(K') / \lambda \leq \eps \cost(K) / \lambda$.
\end{proof}
\begin{lemma}\label{lem:replace-paths-simple}
Let $O$ be a set of cycles.
Let $K \in O$ cross one or more cycles of the skeleton.
Let $p$ be a path of $K$'s path decomposition lying in region $R$ such that $R$ has no holes and
$p$ has at least one edge disjoint from the spanner.
Let $u$ and $v$ be the endpoints of~$p$ on the boundary of $R$.
Finally, let $p'$ be a shortcut in the spanner between $u$ and $v$ in $R$.
Replacing $p$ by $p'$ changes the weight enclosed by $O$ by at most
$3\eps \cost(p)/\lambda$ and increases the cost of $O$ by at most $\eps \cost(p)$.
\end{lemma}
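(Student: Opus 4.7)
The proof splits naturally into bounding the cost increase and the weight change. For the cost, by Step~\ref{line:shortcuts-simple} of Algorithm~\ref{alg:spanner} the spanner contains, inside $R$, the subgraph produced by applying Lemma~\ref{lem:boundary-spanner} to the boundary $C_v$ of $R$. The shortcut $p'$ therefore satisfies $\cost(p') \le (1+\eps)\dist_R(u,v) \le (1+\eps)\cost(p)$, and substituting $p'$ for $p$ in the cycle $K$ (and hence in $O$) increases the cost of $O$ by at most $\eps\cost(p)$.

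For the weight change, I would observe that the set $F$ of faces which switch sides under the substitution is exactly the set of faces enclosed with odd parity by the closed walk $W := p \concat \rev(p')$, and $F$ lies entirely inside $R$. I would then decompose the dart multiset of $W$ into an edge-disjoint collection of simple cycles $C_1, \ldots, C_m$ in $R$, so that
$$\sum_i \cost(C_i) \;=\; \cost(p) + \cost(p') \;\le\; (2+\eps)\,\cost(p),$$
and bound $\weight(F) \le \sum_i \weight_G(C_i)$, since $F$ is the symmetric difference of the face sets enclosed by the $C_i$'s.

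I would finish by invoking Lemma~\ref{lem:no-low-ratio-remaining}: because $R$ has no holes, every simple cycle $C \subseteq R$ with $C \notin \mathcal C$ satisfies $\weight_G(C) < \eps\cost(C)/\lambda$. The only cycle of $\mathcal C$ contained in $R$ is the outer boundary $C_v$, and I would argue that no $C_i$ can equal $C_v$: the hypothesis that $p$ contains at least one edge $e$ disjoint from the spanner (hence not in $C_v$) means $e$ is a dart of $W$ that must appear in some $C_i$, forcing that $C_i\ne C_v$; and a careful choice of Eulerian decomposition (for instance, extracting the cycle containing $e$ first) prevents any remaining $C_i$ from collecting all of $C_v$'s darts. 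Summing the ratio bound then yields
$$\weight(F) \;\le\; \sum_i \weight_G(C_i) \;<\; \frac{\eps}{\lambda}\sum_i \cost(C_i) \;\le\; \frac{\eps(2+\eps)}{\lambda}\cost(p) \;\le\; \frac{3\eps}{\lambda}\cost(p),$$
assuming without loss of generality that $\eps \le 1$.

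The main technical obstacle is justifying that no cycle in the decomposition equals $C_v$, since the endpoints $u,v$ sit on $C_v$ and $p \cup p'$ could, in principle, cover every edge of $C_v$. The argument must exploit the hypothesis that $p$ has a non-spanner (and hence non-$C_v$) edge, either through a constructive decomposition rule as above, or via a $\mathbb{Z}/2$-cycle-space argument showing that the class of $W$ in the cycle space of $R$ cannot equal the full boundary $C_v$. Once that edge case is ruled out, Lemma~\ref{lem:no-low-ratio-remaining} applies uniformly to every $C_i$ and the computation above closes the proof.
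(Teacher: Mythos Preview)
Your cost bound is identical to the paper's. For the weight bound, your high-level idea---form the closed walk $W=p\concat\rev(p')$, observe that the faces switching sides are exactly those enclosed by $W$, and bound their weight via Lemma~\ref{lem:no-low-ratio-remaining}---is also the paper's idea. The difference is in how the ratio lemma gets applied, and your version leaves a real gap precisely where you flag one.

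The paper does \emph{not} decompose $W$ into simple cycles. Instead it passes directly to a single cycle $C'$ built from the darts of $W$ that encloses exactly the set $F$ of faces enclosed by $W$ (``the cycle using darts of $C$ that encloses and connects the boundary of faces of $G$ enclosed by $C$''). The crucial observation is that the non-spanner edge $e\in p$ appears exactly once in $W$ (it lies in $p$ and cannot lie in $p'$, since $p'$ is in the spanner), so the two faces incident to $e$ have opposite enclosure parity with respect to $W$; hence $e$ lies on the boundary of $F$ and therefore $e\in C'$. This single fact gives $C'\notin\mathcal C$ immediately, and Lemma~\ref{lem:no-low-ratio-remaining} applies to $C'$ as one object, yielding $\weight_G(C')<\eps\cost(C')/\lambda\le\eps(2+\eps)\cost(p)/\lambda$.

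Your decomposition route genuinely needs every $C_i\ne C_v$, and neither of your suggested fixes closes that. Extracting first a cycle through $e$ says nothing about whether the remaining darts still contain a full copy of $C_v$ (the cycle through $e$ need not touch $C_v$ at all). The $\mathbb Z/2$-cycle-space observation that $[W]\ne[C_v]$ (which does follow, since $e$ forces $F\subsetneq R$) only tells you that not \emph{all} the $C_i$ together equal $C_v$; it does not prevent a single $C_i$ from being $C_v$ while the others cancel part of it. And if some $C_i=C_v$, your sum $\sum_i\weight_G(C_i)$ picks up all of $\weight(R)$ with no ratio control. The paper's single-cycle construction is exactly the device that dissolves this obstacle: by keeping $F$ intact and taking its boundary as one cycle, the presence of $e$ in that boundary is automatic.
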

\begin{proof}
The bound on the cost change follows from the fact that shortcut $p'$ is a $(1+\eps)$-approximate
shortest path.
Let $e$ be the edge of $p$ strictly internal to~$R$.
Let~$C = p \concat \rev(p')$.
Let~$C'$ be the cycle using darts of~$C$ that encloses and connects the boundary of faces of~$G$
enclosed by~$C$.
Cycle~$C'$ is strictly enclosed in~$R$, because it contains~$e$.
Also, $\cost(C') \leq \cost(C)$.
The faces of $G$ enclosed by $C'$ are exactly the faces that switch sides when replacing $p$ by
$p'$.
Therefore, the replacement will change the weight enclosed by $O$ by exactly $\weight_G(C')$.
By Lemma~\ref{lem:no-low-ratio-remaining}, we have $\ratio_G(C') > \eps^{-1} \lambda$.
Recall, a shortcut is a $(1+\eps)$-approximate shortest path.
Therefore,
$$
  \eps^{-1} \lambda < \frac{\cost(C')}{\weight_G(C')}
                    = \frac{\cost(p)+\cost(p')}{\weight_G(C')}
                    \leq \frac{(2+\eps)\cost(p)}{\weight_G(C')}.
$$
In particular, $\weight_G(C') < \eps(2+\eps)\cost(p)/ \lambda \leq 3\eps \cost(p) /
\lambda$.
\end{proof}
\begin{lemma}\label{lem:replace-paths}
Let $O$ be a set of cycles.
Let $K \in O$ cross one or more cycles of the skeleton.
Let $p$ be a path of $K$'s path decomposition lying in region $R$ such that $R$ has at least one
hole and $p$ has at least one edge disjoint from the spanner.
Let $p'$ be a lift of $p$ from $R$ to $R^2$ with endpoints $u$ and $v$.
Finally, let $p''$ be a shortcut in the spanner between $u$ and $v$ in $R^2$.
Replacing $p$ by the projection of $p''$ changes the weight enclosed by $O$ by at most $3\eps
\cost(p)/\lambda$ and increases the cost of $O$ by at most $\eps \cost(p)$.
\end{lemma}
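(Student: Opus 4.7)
The plan is to follow the same template as Lemma~\ref{lem:replace-paths-simple}, with one crucial twist: forming the symmetric-difference closed walk in the cyclic double cover $R^2$ rather than in $R$ itself. The whole point of introducing $R^2$ was that, by Lemma~\ref{lem:closed-walk-projection}, every closed walk in $R^2$ projects to a closed walk in $R$ that does not enclose the heaviest hole $C_i$. This is precisely what I need in order to apply the contrapositive of Lemma~\ref{lem:no-low-ratio-remaining}, which would otherwise be useless for cycles allowed to surround $C_i$.

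For the cost increase, I will observe that the lift $p'$ is a $u,v$-walk in $R^2$ of cost $\cost(p)$, so any $(1+\eps)$-approximate shortcut $p''$ satisfies $\cost(p'') \leq (1+\eps)\cost(p)$. Since $R^2$ inherits edge costs from $R$ by construction, projecting $p''$ back into $R$ preserves its total cost, so the cost of $O$ increases by at most $\eps\cost(p)$, exactly as in the holes-free case.

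For the weight change, I will form the closed walk $C' := p' \concat \rev(p'')$ in $R^2$ and let $C$ be its projection to $R$, which by Lemma~\ref{lem:closed-walk-projection} does not enclose $C_i$. The set of faces of $G$ that switch sides when $p$ is replaced by the projection of $p''$ is exactly the set of faces enclosed by $C$. As in Lemma~\ref{lem:replace-paths-simple}, I will extract from the darts of $C$ a cycle $C''$ enclosing the same set of faces with $\cost(C'') \leq \cost(C) \leq \cost(p)+\cost(p'') \leq (2+\eps)\cost(p)$. Because $p$ contains an edge $e$ strictly internal to $R$ that is disjoint from the spanner (and therefore absent from the projection of $p''$), the cycle $C''$ must also contain $e$; hence $C''$ is strictly contained in $R$ and cannot be a skeleton cycle. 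Since $C''$ avoids $C_i$, the contrapositive of Lemma~\ref{lem:no-low-ratio-remaining} yields $\ratio_G(C'') > \eps^{-1}\lambda$, so $\weight_G(C'') < \eps\cost(C'')/\lambda \leq \eps(2+\eps)\cost(p)/\lambda \leq 3\eps\cost(p)/\lambda$, which is the desired bound on the weight change.

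The only genuinely delicate point is ensuring that $C''$ does not enclose the heaviest hole; without the cyclic double cover, a single replacement could flip an entire heavy hole from one side of the bipartition to the other and the argument would collapse. Everything else is routine bookkeeping: verifying that $C''$ lies strictly inside $R$ (using the edge $e$) so that it is not already a member of $\mathcal{C}$, and checking that projection from $R^2$ to $R$ behaves well with respect to edge costs and enclosed faces.
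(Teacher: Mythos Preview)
Your proposal is correct and follows essentially the same approach as the paper: form the closed walk $p' \concat \rev(p'')$ in $R^2$, project it to $R$, invoke Lemma~\ref{lem:closed-walk-projection} to guarantee the projection does not enclose the heaviest hole, and then repeat the argument of Lemma~\ref{lem:replace-paths-simple} verbatim (extracting a boundary cycle, using the non-spanner edge $e$ to witness strict containment in $R$, and applying the contrapositive of Lemma~\ref{lem:no-low-ratio-remaining}). The paper's own proof is in fact just two sentences that say exactly this and then defer to Lemma~\ref{lem:replace-paths-simple}.
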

\begin{proof}
Again, the bound on cost change is immediate.
Let~$C$ be the projection of~$p' \concat \rev(p'')$.
By Lemma~\ref{lem:closed-walk-projection}, $C$ does not enclose the largest hole of~$R$.
The rest of the proof is identical to that of Lemma~\ref{lem:replace-paths-simple}.
\end{proof}

We finally prove that our algorithm constructs a spanner.
\begin{lemma}\label{lem:spanner-containment}
The output of Algorithm~\ref{alg:spanner} contains a solution with cost at most $(1+4\eps) \opt$
enclosing $b'W$ weight with~$b' \in [b-6\eps,b+6\eps]$.
\end{lemma}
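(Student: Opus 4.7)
The plan is to start from the near-optimal solution $O^*$ guaranteed by Lemma~\ref{lem:common-boundary-endpoints}, which has cost at most $(1+2\eps)\opt$ and encloses exactly $bW$ weight, and surgically replace each piece of $O^*$ that uses edges outside the spanner by a piece that lies in the spanner. The replacements will be governed by Lemmas~\ref{lem:replace-empty-cycle}--\ref{lem:replace-paths}, and the goal is to accumulate a total cost increase of at most $2\eps\opt$ and a total weight change of at most $6\eps W$.

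For each cycle $K$ of $O^*$ that crosses the skeleton, I would iterate over the paths $p_0,p_1,\ldots$ of the path decomposition guaranteed in the second clause of Lemma~\ref{lem:common-boundary-endpoints}. For any $p_i$ that already lies entirely in the spanner, nothing needs to be done. For any $p_i$ with at least one edge outside the spanner lying in a region $R$, there are two cases: if $R$ has no hole, I apply Lemma~\ref{lem:replace-paths-simple} directly, swapping $p_i$ for the shortcut added at line~\ref{line:shortcuts-simple} of Algorithm~\ref{alg:spanner}. If $R$ has holes, I use Lemma~\ref{lem:walk-lift} to lift $p_i$ to a path $p'_i$ in $R^2$, observe that by Lemma~\ref{lem:common-boundary-endpoints} the endpoints of $p'_i$ lie on the same component of the well-connected cover graph, take the shortcut $p''_i$ computed in the cyclic double cover (whose projection is in the spanner via line~\ref{line:shortcuts-cover}), and invoke Lemma~\ref{lem:replace-paths}. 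Because each shortcut lies inside $R$, the replacement does not alter how the cycle crosses the skeleton.

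Once crossing cycles have been handled, every remaining problematic cycle $K$ is entirely contained in some region $R$. If $\ratio_G(K) > \eps^{-1}\lambda$, Lemma~\ref{lem:replace-empty-cycle} lets me simply delete $K$. Otherwise $\ratio_G(K) \leq \eps^{-1}\lambda$, and Lemmas~\ref{lem:no-low-ratio-remaining} and~\ref{lem:replace-enclosing-cycle} together imply that $K$ encloses the heaviest hole of $R$ and can be swapped for the cheapest cycle other than the outer boundary enclosing that hole, which is in the spanner by line~\ref{line:shortest}. After every cycle of $O^*$ has been treated, the modified solution uses only edges of $\mathcal{S}$.

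Finally, I would aggregate the accumulated costs and weight changes. Each path replacement contributes a cost increase of at most $\eps\cost(p_i)$ and a weight change of at most $3\eps\cost(p_i)/\lambda$, and each cycle-level operation contributes no cost increase and a weight change of at most $\eps\cost(K)/\lambda$. Since the paths and entirely-internal cycles are edge-disjoint and their total cost is bounded by $(1+2\eps)\opt$, the final cost is at most $(1+2\eps)\opt + \eps(1+2\eps)\opt \leq (1+4\eps)\opt$ for small $\eps$, and the net weight change is at most $3\eps(1+2\eps)\opt/\lambda \leq 3\eps(1+2\eps)W \leq 6\eps W$ using $\lambda \geq \opt/W$; hence $b' \in [b-6\eps, b+6\eps]$. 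The main obstacle I anticipate is bookkeeping: verifying that each replacement leaves the solution as a valid collection of cycles whose remaining path decompositions and internal cycles still satisfy the hypotheses of the next replacement lemma, in particular that applying Lemma~\ref{lem:replace-paths} in the cyclic double cover (where the closed walk $p'_i \concat \rev(p''_i)$ projects by Lemma~\ref{lem:closed-walk-projection} to a walk not enclosing the heaviest hole) really reduces to the ratio argument used in Lemma~\ref{lem:replace-paths-simple}, and does not inadvertently allow large weight shifts through the heaviest hole.
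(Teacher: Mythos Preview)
Your proposal is correct and follows essentially the same approach as the paper: start from the near-optimal solution of Lemma~\ref{lem:common-boundary-endpoints}, apply Lemmas~\ref{lem:replace-empty-cycle}--\ref{lem:replace-paths} case by case to push every cycle or decomposition path into the spanner, and then sum the per-replacement cost and weight perturbations against the total cost $(1+2\eps)\opt$ using $\lambda \geq \opt/W$. The only cosmetic difference is that the paper treats the three cycle cases in parallel rather than handling crossing cycles first and contained cycles second, but the logic and the final arithmetic are identical.
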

\begin{proof}
Let $O$ be the near-optimal solution given by\newline
Lemma~\ref{lem:common-boundary-endpoints}.
We will modify $O$ to create a solution $O'$ that lies in the spanner.

Consider any cycle $K$ in $O$.
If it matches the hypothesis of Lemma~\ref{lem:replace-empty-cycle}, then it is simply removed
from $O$.
If it matches the hypothesis of Lemma~\ref{lem:replace-enclosing-cycle}, then it is replaced by
the shortest cycle other than the outer boundary of its region enclosing the largest hole of its
region.
Line~\ref{line:shortest} of Algorithm~\ref{alg:spanner} guarantees this cycle exists in the
spanner.

In the final case, we know $K$ has a path decomposition $p_0,p_1,\dots$.
We replace each~$p_i$ in the decomposition as follows.
If~$p_i$ has no edges disjoint from the spanner, then it remains as is.
If~$p_i$ has an edge strictly internal to a region without holes, then we replace $p_i$ with a
$(1+\eps)$-approximate shortest path between its endpoints.
Line~\ref{line:shortcuts-simple} guarantees this path exists in the spanner.
Finally, if $p_i$ has an edge strictly internal to a region~$R$ with a
hole, then $p'_i$, $p_i$'s lift
to~$R^2$, has both endpoints on the same component of~$R$'s well-connected cover graph.
The output of the algorithm in Lemma~\ref{lem:boundary-to-boundary} contains an
$(1+\eps)$-approximate shortest path $p''_i$ in the cyclic double cover between the endpoints of
$p'_i$.
Lines \ref{line:well-connected} and \ref{line:shortcuts-cover} of Algorithm~\ref{alg:spanner}
guarantee the projection of $p''_i$ exists in the spanner.
Path~$p_i$ is replaced by the projection of $p''_i$.
Any remaining cycles or decomposition paths of~$O$ exist in the skeleton itself, which is also in
the spanner by Line~\ref{line:include-skeleton} of Algorithm~\ref{alg:spanner}.

Solution $O'$ is formed by performing the replacements described above.
Let~$p$ be any cycle or path replaced above. 
By
Lemmas~\ref{lem:replace-enclosing-cycle},~\ref{lem:replace-paths-simple},
and~\ref{lem:replace-paths}, 
after replacing~$p$, the cost of $O$ increases by at most $\eps \cost(p) / \lambda$, and the
weight of faces enclosed by~$O$ changes by at most $3\eps \cost(p)/\lambda$.
By summing over all cycles $K$, we see the total cost increases by at most $\eps(1+2\eps)\opt \leq
2\eps \opt$ (for $\eps \leq 1/2$) and the weight changes by at most $3\eps (1+2\eps) \opt /
\lambda \leq 6 \eps W$.
\end{proof}
\begin{lemma}\label{lem:spanner-cost}
The output of Algorithm~\ref{alg:spanner} has cost at most $O(\eps^{-6} \opt \log W)$.
\end{lemma}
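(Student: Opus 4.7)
The plan is to bound the cost contributed by each of the six lines of Algorithm~\ref{alg:spanner} and then take the sum, observing that the dominant term is the one coming from the shortcut computations.

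First, the edges added in Line~\ref{line:include-skeleton} are exactly the edges of the skeleton $\mathcal C$, so by Lemma~\ref{lem:skeleton-cost} their total cost is $O(\eps^{-1} \opt \log W)$. Next, the edges added in Lines~\ref{line:shortcuts-simple} and~\ref{line:shortcuts-cover} are precisely the edges carrying shortcuts, whose total cost was already bounded in Lemma~\ref{lem:shortcuts-cheap} by $O(\eps^{-6} \opt \log W)$. The projections of well-connected cover edges added in Line~\ref{line:well-connected} are handled directly by Lemma~\ref{lem:clustering-edges-cheap}, giving $O(\eps^{-2} \opt \log W)$.

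The only addition not already charged by an earlier lemma is Line~\ref{line:shortest}: for each region $R$ with at least one hole, the cheapest cycle in $R$ other than its outer boundary that encloses the heaviest hole of $R$. The key observation here is that the heaviest hole of $R$ is itself a cycle in $R$ that encloses the heaviest hole and is not equal to the outer boundary of $R$, so it is a valid candidate for the ``cheapest such cycle.'' Hence the cost picked up in Line~\ref{line:shortest} from region $R$ is at most the cost of the heaviest hole of $R$. Since every skeleton cycle $C \in \mathcal C$ other than the outer face $f_\infty$ appears as a hole of exactly one region (namely the region of its parent in $\mathcal T$), it can be the ``heaviest hole'' of at most one region. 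Therefore the total cost contributed by Line~\ref{line:shortest}, summed over all regions with a hole, is at most the cost of the skeleton, which is again $O(\eps^{-1} \opt \log W)$ by Lemma~\ref{lem:skeleton-cost}.

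Summing these contributions, the dominant term is $O(\eps^{-6} \opt \log W)$ from the shortcut edges, which gives the claimed bound. No step is really a conceptual obstacle; the one place requiring thought (rather than citing a previously proven lemma) is the double-counting argument for Line~\ref{line:shortest}, where I need to argue both that the heaviest hole is an available candidate cycle and that each skeleton cycle is charged at most once in this way.
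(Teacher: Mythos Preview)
Your proposal is correct and follows essentially the same approach as the paper: bound Lines~\ref{line:include-skeleton}, \ref{line:shortcuts-simple}, \ref{line:well-connected}, and \ref{line:shortcuts-cover} by citing Lemmas~\ref{lem:skeleton-cost}, \ref{lem:clustering-edges-cheap}, and \ref{lem:shortcuts-cheap}, and handle Line~\ref{line:shortest} by observing that the cheapest enclosing cycle costs at most the heaviest hole itself, so the total over all regions is at most the skeleton cost. The paper's proof states this last point in one sentence without the double-counting justification you spell out, but the argument is the same.
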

\begin{proof}
The shortest cycles enclosing largest holes within their region have total cost at most that of
the skeleton itself.
The rest of the spanner's components have their costs bounded in Lemmas~\ref{lem:skeleton-cost},
\ref{lem:clustering-edges-cheap}, and \ref{lem:shortcuts-cheap}.
\end{proof}

\vspace{-2.5ex}
\paragraph{Final remarks}
 We have given a bicriteria approximation scheme, but there is much room for improvement.  Is there
an {\em efficient} PTAS, one whose running time is a polynomial with
degree independent of $\epsilon$?  Is there a PTAS that does not
approximate the balance $b$?  Is there perhaps even a polynomial-time
algorithm for bisection in planar graphs when the weights are small
integers?

The research described in this paper began at a meeting at Brown University's Institute
for Computational and Experimental Research in Mathematics.
The authors thank D\'aniel Marx who described to the second author some results on fixed-parameter
tractability of graph separation, leading to some discussion of the
complexity in planar graphs and  MohammadTaghi Hajiaghayi who asked whether there is an
approximation scheme for bisection in planar graphs.
Hajiaghayi had some initial thoughts due to its
similarity to multiway cut for which there is a PTAS on planar graphs.

\bibliographystyle{abbrv}
\bibliography{short,balanced-separator}

\end{document}